  \NewCommandCopy{\originalbibitem}{\bibitem}
\renewcommand\normalsize{\@setfontsize\normalsize{11}{12}}
\newcommand{\procOrFullVersion}[2]{#2}
\newtheorem{fact}[equation]{Fact}
\definecolor{earthyellow}{rgb}{0.88, 0.66, 0.37}
\definecolor{bostonuniversityred}{rgb}{0.8, 0.0, 0.0}
\definecolor{persianblue}{rgb}{0.11, 0.22, 0.73}
\newcommand{\PLang}[2][]{
\def\bannachempty{#1}
\ifx\empty\bannachempty\text{p-}\else\text{p$_{#1}$-}\fi
\textsc{#2}
}
\newcommand{\R}{\mathbb{R}}
\newcommand{\IN}{\mathbb{N}}
\newcommand{\IR}{\mathbb{R}}
\newcommand{\DeclareMathActive}[2]{
  
  \expandafter\edef\csname keep@#1@code\endcsname{\mathchar\the\mathcode`#1 }
  \begingroup\lccode`~=`#1\relax
  \lowercase{\endgroup\def~}{#2}
  \AtBeginDocument{\mathcode`#1="8000 }
}
\newcommand{\compactEquals}[1]{\let\IsInPP=1#1\let\IsInPP=\relax}
\newcommand{\PP}[1]{\mathbb{P}(\compactEquals{#1})}
\newcommand{\pp}[1]{{P}(\compactEquals{#1})}
\newsavebox{\neqbox}
\savebox{\neqbox}{$\neq$}
\def\neqinPP{\mathrel{\usebox{\neqbox}}}
\newcommand{\existsR}{\ensuremath{\mathsf{\exists\IR}}}
\newcommand{\ETR}{\ensuremath{\textsc{ETR}}}
\newcommand{\succETR}{\ensuremath{\textup{succ}\text{-}\textsc{ETR}}}
\newcommand{\succR}{\ensuremath{\mathsf{succ\text{-}\exists\IR}}}
\newcommand{\NP}{\ensuremath{\mathsf{NP}}}
\newcommand{\cP}{\ensuremath{\mathsf{P}}}
\newcommand{\NEXP}{\ensuremath{\mathsf{NEXP}}}
\newcommand{\realNEXP}{\ensuremath{\mathsf{NEXP}_{\text{real}}}}
\newcommand{\PSPACE}{\ensuremath{\mathsf{PSPACE}}}
\newcommand{\EXPSPACE}{\ensuremath{\mathsf{EXPSPACE}}}
\newcommand{\ccPP}{\ensuremath{\mathsf{PP}}}
\newcommand{\sharpP}{\#\mathsf{P}}
\newcommand{\leqp}{\leq_{\text{P}}}
\DeclareMathOperator{\poly}{poly}
\newcommand{\maxvaluecount}{c}
\newcommand{\cL}{{\mathcal L}}
\newcommand{\cE}{{\mathcal E}}
\newcommand{\cF}{{\mathcal F}}
\newcommand{\cT}{{\mathcal T}}
\newcommand{\fM}{{\mathfrak M}}
\newcommand{\bQ}{{\bf Q}}
\newcommand{\bU}{{\bf U}}
\newcommand{\bV}{{\bf V}}
\newcommand{\bW}{{\bf W}}
\newcommand{\bY}{{\bf Y}}
\newcommand{\bX}{{\bf X}}
\newcommand{\cB}{{\cal B}}
\newcommand{\cC}{{\cal C}}
\newcommand{\cG}{{\cal G}}
\newcommand{\bc}{{\bf c}}
\newcommand{\be}{{\bf e}}
\newcommand{\bff}{{\bf f}}
\newcommand{\bi}{{\bf i}}
\newcommand{\bp}{{\bf p}}
\newcommand{\bq}{{\bf q}}
\newcommand{\bt}{{\bf t}}
\newcommand{\bu}{{\bf u}}
\newcommand{\bv}{{\bf v}}
\newcommand{\bw}{{\bf w}}
\newcommand{\by}{{\bf y}}
\newcommand{\bx}{{\bf x}}
\newcommand{\bz}{{\bf z}}
\newcommand{\pa}{{\bf pa}}
\newcommand{\Pa}{{\bf Pa}}
\def\Eprop{\cE_{\textit{prop}}}
\def\Eint{\cE_{\textit{int}}}
\def\Epint{\cE_{\textit{post-int}}}
\def\Ecounter{\cE_{\textit{counterfact}}}
\def\probname#1#2{^{\text{#1}}_{\text{#2}}}
\def\probsumname#1#2{^{\text{#1}{\langle{\scriptscriptstyle\Sigma}\rangle}}_{\text{#2}}}
\def\probnamesum#1#2{^{\text{#1}{\langle{\scriptscriptstyle\Sigma}\rangle}}_{\text{#2}}}
\def\probsumgraphname#1#2{^{\text{#1}{\langle{\scriptscriptstyle\Sigma}\rangle}}_{\mbox{\scriptsize\sc DAG}, \text{#2}}}
\def\Tprobpolysum{\cT\probnamesum{poly}{1}}
\def\Tprobpoly{\cT\probname{poly}{1}}
\def\Tcausalpolysum{\cT\probnamesum{poly}{3}}
\def\Tipolysum{\cT\probnamesum{poly}{$i$}}
\def\Tilinsum{\cT\probsumname{lin}{$i$}}
\def\Ticompsum{\cT\probsumname{base}{$i$}}
\def\Tipoly{\cT\probname{poly}{$i$}}
\def\Tilin{\cT\probname{lin}{$i$}}
\def\Ticomp{\cT\probname{base}{$i$}}
\def\Lprobstar{\cL^{*}_{1}} 
\def\Lcausalstar{\cL^{*}_{3}}
\def\Listar{\cL^{*}_{i}}
\def\Lcausalpolysum{\cL\probnamesum{poly}{3}}
\def\Lipolysum{\cL\probnamesum{poly}{$i$}}
\def\Lab{\textit{Lab}}
\newcommand{\SATprobstar}{\mbox{\sc Sat}^{*}_{{\cL_1}}}
\newcommand{\SATcausalstar}{\mbox{\sc Sat}^{*}_{\cL_3}}
\newcommand{\SATistar}{\mbox{\sc Sat}^{*}_{\cL_i}}
\newcommand{\SATprobpoly}{\mbox{\sc Sat}^{\textit{poly}}_{{\cL_1}}}
\newcommand{\SATipoly}{\mbox{\sc Sat}^{\textit{poly}}_{{\cL_i}}}
\newcommand{\SATistargraph}{\mbox{\sc Sat}^{*}_{\mbox{\scriptsize\sc DAG}, {\cL_i}}}
\newcommand{\SATprobstargraph}{\mbox{\sc Sat}^{*}_{\mbox{\scriptsize\sc DAG}, {\cL_1}}}
\newcommand{\SATcausalstargraph}{\mbox{\sc Sat}^{*}_{\mbox{\scriptsize\sc DAG}, {\cL_3}}}
\newcommand{\SATprobcompgraph}{\mbox{\sc Sat}^{\textit{base}}_{\mbox{\scriptsize\sc DAG}, {\cL_1}}}
\newcommand{\SATproblingraph}{\mbox{\sc Sat}^{\textit{lin}}_{\mbox{\scriptsize\sc DAG}, {\cL_1}}}
\newcommand{\SATprobcompsumgraph}{\mbox{\sc Sat}\probsumgraphname{base}{${\cal L}_1$}}
\newcommand{\SATproblinsumgraph}{\mbox{\sc Sat}\probsumgraphname{lin}{${\cal L}_1$}}
\newcommand{\SATprobpolysumgraph}{\mbox{\sc Sat}\probsumgraphname{poly}{${\cal L}_1$}}
\newcommand{\SATinterventcompsumgraph}{\mbox{\sc Sat}\probsumgraphname{base}{${\cal L}_2$}}
\newcommand{\SATinterventlinsumgraph}{\mbox{\sc Sat}\probsumgraphname{lin}{${\cal L}_2$}}
\newcommand{\SATinterventpolysumgraph}{\mbox{\sc Sat}\probsumgraphname{poly}{${\cal L}_2$}}
\newcommand{\SATcausalcompsumgraph}{\mbox{\sc Sat}\probsumgraphname{base}{${\cal L}_3$}}
\newcommand{\SATcausalpolysumgraph}{\mbox{\sc Sat}\probsumgraphname{poly}{${\cal L}_3$}}
\def\probsumBN#1{^{\text{#1}{\langle{\scriptscriptstyle\Sigma}\rangle}}}
\newcommand{\SATprobstarBN}{\mbox{\sc BN-MC}^{*}}
\newcommand{\SATprobpolysumBN}{\mbox{\sc BN-MC}\probsumBN{poly}}
\def\false{\textsc{false}}
\def\true{\textsc{true}}
\newcommand{\SATprobcompsum}{\mbox{\sc Sat}\probsumname{base}{${\cal L}_1$}}
\newcommand{\SATproblinsum}{\mbox{\sc Sat}\probsumname{lin}{${\cal L}_1$}}
\newcommand{\SATprobpolysum}{\mbox{\sc Sat}\probsumname{poly}{${\cal L}_1$}}
\newcommand{\SATinterventcompsum}{\mbox{\sc Sat}\probsumname{base}{${\cal L}_2$}}
\newcommand{\SATinterventlinsum}{\mbox{\sc Sat}\probsumname{lin}{${\cal L}_2$}}
\newcommand{\SATinterventpolysum}{\mbox{\sc Sat}\probsumname{poly}{${\cal L}_2$}}
\newcommand{\SATcausalcompsum}{\mbox{\sc Sat}\probsumname{base}{${\cal L}_3$}}
\newcommand{\SATcausalpolysum}{\mbox{\sc Sat}\probsumname{poly}{${\cal L}_3$}}
\newcommand{\SATprobpolysumsm}{\mbox{\sc Sat}\probsumname{poly}{\textit{sm},\,${\cal L}_1$}}
\newcommand{\SATinterventpolysumsm}{\mbox{\sc Sat}\probsumname{poly}{\textit{sm},\,${\cal L}_2$}}
\newcommand{\SATcausalpolysumsm}{\mbox{\sc Sat}\probsumname{poly}{\textit{sm},\,${\cal L}_3$}}
\newcommand{\SATipolysumsm}{\mbox{\sc Sat}\probsumname{poly}{\textit{sm},\,${\cal L}_i$}}
\newcommand{\SATproblinsumsm}{\mbox{\sc Sat}\probsumname{lin}{\textit{sm},\,${\cal L}_1$}}
\newcommand{\SATinterventlinsumsm}{\mbox{\sc Sat}\probsumname{lin}{\textit{sm},\,${\cal L}_2$}}
\newcommand{\SATcausallinsumsm}{\mbox{\sc Sat}\probsumname{lin}{\textit{sm},\,${\cal L}_3$}}
\newcommand{\SATprobcompsumsm}{\mbox{\sc Sat}\probsumname{basic}{\textit{sm},\,${\cal L}_1$}}
\newcommand{\SATinterventcompsumsm}{\mbox{\sc Sat}\probsumname{basic}{\textit{sm},\,${\cal L}_2$}}
\newcommand{\SATcausalcompsumsm}{\mbox{\sc Sat}\probsumname{basic}{\textit{sm},\,${\cal L}_3$}}
\newcommand{\PPB}[1]{P_{\cB}(\compactEquals{#1})}
\def\Plus{\texttt{+}}
\def\Minus{\texttt{-}}
\newcommand{\cJ}{{\mathcal J}}
\newcommand\myldots{\!\makebox[1em][c]{.\hfil.\hfil.}}  
\def\mmid{ | }
\def\dop{\textit{do}}
\newcommand*{\indep}{
  \mathbin{
    \mathpalette{\@indep}{}
  }
}
\newcommand*{\nindep}{
  \mathbin{
    
    \mathpalette{\@indep}{/}
                               
  }
}
\newcommand*{\@indep}[2]{

  \sbox0{$#1\perp\m@th$}
  \sbox2{$#1=$}
  \sbox4{$#1\vcenter{}$}
  \rlap{\copy0}
  \dimen@=\dimexpr\ht2-\ht4-.2pt\relax

  \kern\dimen@
  \ifx\\#2\\
  \else
    \hbox to \wd2{\hss$#1#2\m@th$\hss}
    \kern-\wd2 
  \fi
  \kern\dimen@
  \copy0 
}
\newenvironment{compactitem}{\begin{itemize}}{\end{itemize}}
\title{Probabilistic and Causal Satisfiability: \texorpdfstring{\\}{} Constraining the Model}  
\author{Markus Bl\"{a}ser}{Saarland University, Germany}{mblaeser@cs.uni-saarland.de}{
https://orcid.org/0000-0002-1750-9036}{}
\author{Julian D\"{o}rfler}{Saarland University, Germany}{jdoerfler@cs.uni-saarland.de}{https://orcid.org/0000-0002-0943-8282}{}
\author{Maciej Li\'{s}kiewicz}{University of Lübeck, Germany}{maciej.liskiewicz@uni-luebeck.de}{https://orcid.org/0000-0003-0059-5086}{}
\author{Benito van der Zander}{University of Lübeck, Germany}{b.vanderzander@uni-luebeck.de}{https://orcid.org/0000-0001-5957-4621}{Work supported by the Deutsche Forschungsgemeinschaft (DFG) grant 471183316 (ZA 1244/1-1).}
\authorrunning{M. Bl\"{a}ser, J. D\"{o}rfler, M.  Li\'{s}kiewicz, and B. van der Zander} 
\keywords{Existential theory of the real numbers, Computational complexity, Probabilistic logic, Structural Causal Models
} 
\begin{document}

\maketitle

\begin{abstract}
We study the complexity of satisfiability problems in probabilistic and causal reasoning. Given random variables $X_1, X_2,\ldots$ over finite domains, the basic terms are probabilities of propositional formulas over atomic events $X_i = x_i$, such as $\PP{X_1 = x_1}$ or $\PP{X_1 = x_1 \vee X_2 = x_2}$. The basic terms can be combined using addition (yielding linear terms) or multiplication (polynomial terms). The probabilistic satisfiability problem asks whether a joint probability distribution satisfies a Boolean combination of (in)equalities over such terms. Fagin et al.\ \cite{fagin1990logic} showed that for basic and linear terms, this problem is $\NP$-complete, making it no harder than Boolean satisfiability, while Mossé et al.\ \cite{ibeling2022mosse} proved that for polynomial terms, it is complete for the existential theory of the reals.

Pearl’s Causal Hierarchy (PCH) extends the probabilistic setting with interventional and counterfactual reasoning,  enriching the expressiveness of languages. However, Mossé et al.\ \cite{ibeling2022mosse} found that satisfiability complexity remains unchanged.  Van der Zander et al.\ \cite{zander2023ijcai} showed that introducing a marginalization operator to languages induces a significant increase in complexity.

We extend this line of work by adding two new dimensions to the problem by constraining the models. First, we fix the graph structure of the underlying structural causal model, motivated by settings like Pearl’s do-calculus, and give a nearly complete landscape across different arithmetics and PCH levels. Second, we study small models. While earlier work showed that satisfiable instances admit polynomial-size models, this is no longer guaranteed with compact marginalization. We characterize the complexities of satisfiability under small-model constraints across different settings.

\end{abstract}

\setcounter{footnote}{0}       

\section{Introduction}\label{sec:intro}

Reasoning about probability is essential in many research fields. In computer science, it plays a crucial role in analyzing probabilistic programs, understanding a program's behavior under probabilistic input assumptions, and handling uncertain information in expert systems.
In a seminal paper, Fagin et al.~\cite{fagin1990logic}, introduce a logic to reason about probabilities. Their aim was to formulate a calculus that allows to phrase statements like ``$\PP{X=x} = 1/2$'' or ``$\PP{X = x} < \PP{Y = y}$''. They study (among other things) the satisfiability problem for Boolean combinations of such terms, i.e., is there a probability distribution that satisfies all the terms. It turns out that the expressiveness of the underlying calculus has a great influence on the complexity of such satisfiability problems. For instance, in their original work, Fagin et al.\ investigated three types of terms in the (in)equalities: basic ones, like the one above, which consist of only single basic probabilities, linear combinations of basic probabilities, and polynomial expressions in basic probabilities. This choice parameterizes the satisfiability problem and it turns out that the type of equations has an impact on the complexity of the corresponding satisfiability problem.

We will study the satisfiability of such formulas from a multi-parametric view: there are several parameters in the underlying calculus, like the type of equations mentioned above, and by adjusting these parameters, we get a large family of satisfiability of varying complexities, depending on the choice of our parameters. There has been a lot of work along these lines, see e.g.\ \cite{fagin1990logic, ibeling2020probabilistic, ibeling2022mosse, ibeling2024probabilistic, zander2023ijcai, blaser2024existential, IBELING2024103339,doerflerICLR2025}, and the main contribution of this paper is that we finally (almost) complete the whole picture.

Another dimension is that we can enhance the basic probability terms. This is inspired by the causal theory in AI.
The development of the modern causal theory in AI and empirical sciences has greatly benefited from an influential 
structured approach to inference about causal phenomena, which is based on a reasoning 
hierarchy named ``Ladder of Causation'', also often referred to as ``Pearl’s Causal Hierarchy'' 
(PCH) (\cite{shpitser2008complete,Pearl2009,bareinboim2022pearl}, see also  \cite{pearl2018book}
for a gentle introduction to the topic). This three-level framework formalizes various types of reasoning 
that reflect the progressive sophistication of human thought regarding causation. It arises from a 
collection of causal mechanisms that model the ``ground truth'' of \emph{unobserved} nature
formalized within a Structural Causal Model (SCM). These mechanisms are then combined 
with three patterns of reasoning concerning \emph{observed} phenomena expressed at the 
corresponding layers of the hierarchy, known as \emph{probabilistic} (also called 
\emph{associational} in the AI literature), 
\emph{interventional},  and \emph{counterfactual} (for formal definitions of these concepts as well as for illustrative examples, see 
Section~\ref{sec:preliminaries}).

A basic term at the probabilistic (observational) layer is expressed as a common probability, 
such as\footnote{In our paper, we consider random variables over discrete, finite domains. By an event,
we mean a propositional formula over atomic events of the form $X=x$, such 
as $(X=x\wedge Y=y)$ or $(X=x \vee Y\neq y)$. Moreover, by $\PP{Y=y, X=x}$, etc., we mean, 
as usually, $\PP{X=x \wedge Y=y}$. Finally by using lowercase letters in $\PP{x, y}$, we abbreviate  $\PP{Y=y, X=x}$. }
$\PP{x, y}$. This may represent queries like ``How likely does a patient 
have both diabetes $(X=x)$ and high blood pressure $(Y=y)$?'' This layer corresponds precisely to the calculus developed by Fagin et al.\ mentioned above.
%
The interventional patterns extend the basic probability terms by allowing the use of Pearl's 
do-operator~\cite{Pearl2009} which models an experiment like a randomized controlled trial \cite{fisher1936design}. 
For instance, $\PP{[x]y}$ which\footnote{A common and popular 
notation for the post-interventional probability is $\PP{Y=y\mmid \compactEquals{\dop(X=x)}}$. 
In this paper, however,  we use the notation $\PP{[X=x]Y=y}$ since it is more convenient for our analysis.}, 
in general differs from $\PP{y\mmid x}$, allows to ask hypothetical questions such as, e.g., 
``How likely it is that a patient's headache will be cured $(Y=y)$ if he or she takes aspirin $(X=x)$?''. 
An example formula at this layer is $\PP{[x]y}=\sum_z \PP{y\mmid x,z}\PP{z}$
which estimates the causal effect of the intervention  $\dop(X = x)$ (all patients take aspirin) 
on outcome variable $Y = y$ (headache cure). It illustrates the use of the prominent back-door 
adjustment to eliminate the confounding effect of a factor represented by variable $Z$~\cite{Pearl2009}.
The basic terms at the highest level of the hierarchy enable us to formulate queries related to counterfactual 
situations. For example, $\PP{[X=x]Y=y \mmid (X=x',Y=y')}$  expresses the probability that, for instance, 
a patient who did not receive a vaccine $(X=x')$ and died $(Y=y')$ would have lived ($Y=y$) if he or she had been vaccinated ($X=x$).

The computational complexity aspects of reasoning about uncertainty in this framework 
have been the subject of intensive studies in the past decades. The research has resulted 
in a large number of significant achievements, especially in the case of probabilistic 
inference with the input probability distributions encoded by Bayesian 
networks~\cite{pearl1988probabilistic}. These problems are of great interest in AI and 
important results from this perspective 
include~\cite{cooper1990computational,dagum1993approximating,roth1996hardness,park2004complexity,koller2009probabilistic}.
However, despite intensive research, many fundamental problems in the field remain open.

The main interest of our research is focused on the precise characterization of the computational complexity of satisfiability problems (and their validity counterparts) for languages of all PCH layers, combined with increasing the expressiveness of (in)equalities by enabling the use of more 
complex operators. Our primary concern is the effect of constraining the model in these satisfiability problems, on the one hand by fixing the graph structure of the model, and on the other by bounding the model size to be polynomial. 

\subsection{Reasoning about probabilities: a multi-parametric view}

One starting point for our investigations is a pioneering paper work by Fagin et al.~\cite{fagin1990logic}, who introduce a logic to reason about probabilities. We are given a set of random variables over some fixed finite domain $D$, often $\{0,1\}$.  They consider formulas consisting of Boolean combinations of (in)equalities of \emph{basic} and \emph{linear} terms, like 
$\PP{(X = 0 \vee Y=1) \wedge (X = 0 \vee Y = 0)}\compactEquals{=1} \wedge(\PP{X = 0}\compactEquals{=0}\vee \PP{X = 0}\compactEquals{=1})  \wedge(\PP{Y = 0}\compactEquals{=0}\vee \PP{Y = 0}\compactEquals{=1})$ with binary variables $X,Y$. The authors provide a complete axiomatization for the used logic, which is essentially a formalization of  Nilsson’s probabilistic logic~\cite{nilsson1986probabilistic} and they show that the problem of deciding satisfiability is  $\NP$-complete. Thus, surprisingly,  the complexity is no worse than that of propositional logic. Fagin et al.~extend then the language to (in)equalities of \emph{polynomial} terms, with the goal of reasoning about conditional probabilities. They prove that the latter variant is $\NP$-hard and contained in $\PSPACE$. Recently, Moss{\'e} et al.~\cite{ibeling2022mosse} have given the exact complexity of this problem, showing that deciding the satisfiability is~$\existsR$-complete, where $\exists\R$ is the well-studied class defined as the closure of the
Existential Theory of the Reals (ETR) under polynomial time many-one reductions.

In this work, we study the satisfiability (and validity) problem for such formulas involving probabilities. As already seen above, the choice of the admissible operations have an impact on the complexity. There are further choices, which we can make and which we will discuss in the following. All these choices have an impact on the complexity. Therefore, we will phrase these satisfiability problems as multi-parametric or multi-dimensional problems. The first dimension is the expressiveness of the underlying arithmetic:

\begin{description}
    \item[$\bullet$ Arithmetic:] Basic, linear, or polynomial. 
\end{description}

The second dimension will be the layer in the Pearl’s Causal Hierarchy which we have already discussed in the previous section. Reasoning about uncertainty has been the subject of intensive studies in the past decades. The research has resulted  in a large number of significant achievements, especially in the case of probabilistic inference with the input probability distributions encoded by Bayesian 
networks~\cite{pearl1988probabilistic}. The setting is of great interest in AI since it allows one not only to reason about \emph{observations}, but also about \emph{causality},  and even \emph{counterfactuals}, which is for instance important in fairness considerations. These three layers form a hierarchy of increasing expressiveness (for a more detailed description of the concepts, see Section~\ref{sec:preliminaries}).

%


\begin{description}
    \item[$\bullet$ PCH layer:] Probabilistic (observational), causal (interventional), or counterfactual.
\end{description}

The languages used in~\cite{fagin1990logic,ibeling2022mosse} are capable of fully expressing probabilistic reasoning, in particular, they allow to express \emph{marginalization}, which is a common paradigm in this field. However, in the frameworks discussed above, this ability is very
limited since the languages \emph{do not} allow the use of a unary~summation~operator $\Sigma$. Thus, for instance\footnote{
We have chosen this example because it is short. In this particular example, the joint probability could be directly written as $\PP{y}$ in the calculus of Fagin et al., see the formal definition in Section~\ref{sec:preliminaries}. This is however not true anymore if we sum over more complicated expressions.}, to express the marginal distribution of a random variable $Y$ over a subset of (binary) variables $\{Z_1,\ldots,Z_m\}$ as $\sum_{z_1,\ldots,z_m} \PP{y,z_1,\ldots,z_m}$, an encoding without summation requires an expansion into $\PP{y,Z_1=0,\ldots,Z_m=0} + \ldots +\PP{y,Z_1=1,\ldots,Z_m=1}$, which of exponential size in $m$. Consequently, to analyze the complexity aspects of the studied problems, languages that exploit the standard notation of probability theory and statistics, one requires an encoding that represents marginalization via the sum operator $\Sigma$. In the recent paper \cite{zander2023ijcai}, the authors present the first systematic study in this setting. They introduce a new natural class, named $\succR$, which can be viewed as a succinct variant of $\existsR$. 

%
%

Given the impact of the compact marginalization operator, the way we can perform marginalization will be the third dimension: 

\begin{description}
    \item[$\bullet$ Marginalization:] Expanded or compact. 
\end{description}

For expanded marginalization, the complexity results by Fagin et al. \cite{fagin1990logic}, as well as Moss{\'e} \cite{ibeling2022mosse} et al., can be easily adapted to capture all other parameter ranges, see \procOrFullVersion{the full version of our work}{Appendix~\ref{app:expanded}}. Therefore, throughout this paper, we only deal with the case when we have a compact marginalization operator in our calculus.

The fact that the decision problem for basic and linear arithmetic without summation is in $\NP$ is not obvious. A model for a formula is a joint probability distribution of $n$, say, binary variables, which is a table with $2^n$ entries. Furthermore, it is a priori not clear that the bit size of the entries is polynomial. It turns out, that these problems 
have what is called a \emph{small model property}. If there is a model, that is, a joint probability distribution satisfying the expression, then there is one which has only polynomially many non-zero entries. This follows by linear algebra arguments. If the model is small, then it can be guessed efficiently 
in $\NP$. In the case of polynomial arithmetic, we still have the small model property, but one cannot guarantee that the entries have polynomial size.
However, the compact marginalization operator destroys the small model property that was crucial in the work of \cite{fagin1990logic} and \cite{ibeling2022mosse} and in fact increases the complexity of the 
satisfiability problems dramatically \citep{zander2023ijcai}. So the next dimension is the model size, which can either be polynomially bounded or arbitrary:

\begin{description}
    \item[$\bullet$  Model size:] Small (polynomially many entries) or unbounded. 
\end{description}

The impact of restricting the model size in the presence of a compact marginalization operator will be one of the main topics in this work (see the results in Section~\ref{sec:small}). 

Causal models are often represented as a graph or Bayesian network where all random variables are represented as nodes and the edges encode which variables influence each other. E.g., at the probabilistic level, the graph would encode which variables are conditionally independent of each other.
In the satisfiability problems described so far, the graph was not mentioned, so the task was to decide whether there exists any model with any graph structure that satisfies the formula. 
In many studies, a researcher can infer information about the graph structure, combining observed and experimental data with existing knowledge. 
Indeed, 
learning graphical, causal structures from data has been the subject of a considerable amount of research (see, e.g., \cite{meinshausen2016methods,drton2017structure,glymour2019review,squires2022causal} for recent reviews).
 Thus, we will study the satisfiability (resp. validity) of formulas 
 with the additional requirement that the graph structure $\cG$ is also given in the input. 
The goal would be to determine whether the formula 
is satisfied in a model (resp. is valid in all models) 
having the structure~$\cG$.

In fact, such a constraint validity problem is one of the central problems of causality. E.g., given a graph structure, Pearl's  prominent do-calculus \cite{Pearl2009} is often applied to show the validity of the equivalence between causal and probabilistic expressions.
The  impact of including the graph in the input on the complexity will be the second important question that we will study in this work (see Section~\ref{sec:given:DAG:structure}). Thus, as a fifth dimension, we study the model structure:

\begin{description}
    \item[$\bullet$  Model structure:] Specified by a graph as a part of the input or unconstrained.
\end{description}

\subsection{A brief overview of our results}

We continue by giving an informal description of our results. A detailed description can be found in Section~\ref{sec:results}. 

We first show that constraining the graph structure of the model makes the problem only harder in the sense that the unconstrained problem can be reduced to the constrained one, since the unconstrained problem can always be reduced to the constrained one with the complete directed acyclic graph as a model. It turns out that for the probabilistic or interventional layer together with basic or linear arithmetic, there is indeed a jump. The probabilistic layer becomes at least $\existsR$-hard, while  the interventional layer jumps from $\PSPACE$ to $\NEXP$. For all other combinations of the PCH layer and arithmetic, no such jump happens and the complexity stays the same, however, we require new proof techniques. 
The exact results are given compactly in Table~\ref{fig:graph} (for comparison, the complexity landscape for unconstrained models, can be found in Table~\ref{fig:unconstrained}).

Second, we consider the case when the underlying model is small; small here means that for the underlying probability distribution $P$, the number of tuples $(u_1,\dots,u_m)$ with $P(U_1 = u_1,\dots,U_m = u_m) > 0$ is polynomially bounded in the input size. For the interventional and counterfactual layer with polynomial arithmetic, this does not affect the complexity of the problem (again, new proof techniques are required), but for the probabilistic layer, the complexity drops down to $\existsR^\Sigma$, a new class contained in $\PSPACE$.
The full results are given compactly in Table~\ref{fig:smallmodel}.

\section{Preliminaries}
\label{sec:preliminaries}

\subsection{Pearl’s Causal Hierarchy: An example}

\label{sec:example:PCH}

To illustrate the main ideas behind Pearl’s Causal Hierarchy (PCH), we present below
an example that, we hope, will make it easier to understand the formal definitions given in Section~\ref{sec:appendix:formal:definitions:syntax:and:semantics}.
In this example, we consider a hypothetical scenario involving three attributes represented by binary random variables:
Age, modeled by $Z=1$ (young) and $Z=0$ (old), (COVID-19) Vaccination represented by $X=1$, if yes, and $X=0$ if not vaccinated, and Recovery, represented by $Y=1$ (and $Y=0$ meaning mortality).
Below, we describe a structural causal model (SCM) that represents an unobserved true mechanism underlying this scenario and illustrates the canonical patterns of reasoning expressible at different levels of the PCH.

\begin{figure}[h] 
\vspace*{6mm}
\begin{center}
\footnotesize
$\begin{array}{c@{\hskip 1.2mm}c@{\hskip 1.2mm}c@{\hskip 1.2mm}|c|c@{\hskip 1.2mm}c@{\hskip 1.2mm}c@{\hskip 1.2mm}}\vspace*{-8mm}\\
   U_1&U_2&U_3&\multicolumn{1}{c|}{P(\bu)}&Z&X&Y\\ \hline \hline
 0& 0&0 &0.0474&0 &1&0\\
 0& 0&1 &0.4266&0 &1&1\\
 0& 1&0 &0.0126&0 &0&1\\
 0& 1&1 &0.1134&0 &0&0\\
 1& 0&0 &0.0316&1 &0&1\\
 1& 0&1 &0.2844&1 &0&0\\
 1& 1&0 &0.0084&1 &1&1\\
 1& 1&1 &0.0756&1 &1&1\\
 \multicolumn{7}{c}{(a)~ \text{Hidden}}
   \end{array}$
   \hspace*{2mm}
   $\begin{array}{c@{\hskip 1.2mm}c@{\hskip 1.2mm}c@{\hskip 1.2mm}|c}\vspace*{-8mm}\\
   Z& X&Y&\multicolumn{1}{c}{P(z,x,y)}\\ \hline \hline
0 &0&0& 0.1134\\
0 &0&1& 0.0126\\  
0 &1&0&  0.0474\\
0 &1&1& 0.4266\\
1 &0&0& 0.2844\\
1 &0&1& 0.0316\\
1 &1&1& 0.0840\\
\multicolumn{4}{c}{}\\
\multicolumn{4}{c}{(b)~\text{Observed $P$}}
   \end{array}
 \hspace*{3mm}
   \begin{array}{c|c@{\hskip 1.2mm}c@{\hskip 1.2mm}c@{\hskip 1.2mm}}\vspace*{-8mm}\\
   \multicolumn{1}{c|}{P(\bu)}&Z&X&Y\\ \hline \hline
 0.0474&0 &1&0\\
 0.4266&0 &1&1\\
 0.0126&0 &1&0\\
0.1134 &0 &1&1\\
0.0316 &1 &1&1\\
0.2844 &1 &1&1\\
 0.0084&1 &1&1\\
0.0756 &1 &1&1\\
\multicolumn{4}{c}{(c)~\text{\it do$(X=1)$}}
   \end{array}
   \hspace*{2mm} 
   \begin{array}{c@{\hskip 1.2mm}c@{\hskip 1.2mm}|c}\vspace*{-8mm}\\
   Z&Y&\multicolumn{1}{c}{P([X=1]z,y)}\\ \hline \hline
   0&0&   0.06 \\ 
   0&1&   0.54\\ 
   1&0&   0.00  \\
   1&1&   0.40\\ 
   \multicolumn{3}{c}{}\\
   \multicolumn{3}{c}{}\\
   \multicolumn{3}{c}{}\\
   \multicolumn{3}{c}{}\\
   \multicolumn{3}{c}{(d)~\text{Post-int. $P$}}
   \end{array}
   $
\end{center}
\caption{$(a)$ \emph{Unobserved true nature}: Probabilities of the unobserved variables and the induced by the mechanism $\cF$ outcomes of the observed variables. $(b)$ \emph{Observational layer}:  Probability distribution $P(z,x,y)$ of the observed variables. $(c)$  \emph{Interventional layer}: The modified mechanism $\cF$ by using the operator {\it do}$(X=1)$ (shown in $(c)$) leads to the \emph{post-interventional} distribution
$(d)$, denoted as $\pp{[X=1]z,y}$, for $z,y\in\{0,1\}$.
A challenging task here is to compute the post-interventional distribution $(d)$ from the observed distribution $(b)$.
\label{fig:ex:1}}
\end{figure}

\noindent{\bf Structural Causal Model.} An SCM is defined as a tuple $(\cF,P,\bU, \bX)$ which is of \emph{unobserved nature} from the perspective of an empirical researcher. 
It specifies 
the distribution $P(\bU)$ of the population and the mechanism $\cF$. In our example, the model assumes three independent binary random variables $\bU=\{U_1,U_2,U_3\}$, with probabilities:
$
\pp{U_1=1} = 0.4,
\pp{U_2=1} = 0.21,
\pp{U_3=1} = 0.9
$. 
They affect the observed endogenous (observed) random variables $Z,X,Y$ via the mechanism $\cF=\{F_1,F_2,F_3\}$ specified as follows: 
$Z:=F_1(U_1) = U_1;$\ 
$X:=F_2(Z,U_2) = Z U_2 + (1-Z)(1-U_2);$\ 
 $Y:=F_3(Z,X,U_3) =
XZ + (1-X)(1-U_3) +  X(1-Z)U_3$.

Thus, the model determines the distribution 
$P(\bu)$, for $\bu=(u_1,u_2,u_3)$, 
and the values for the observed variables, 
as can be seen in Fig.~\ref{fig:ex:1}$(a)$.

The unobserved random variable $U_1$ reflects the age of the population  and $Z$ is a function of $U_1$ (which may be more complex in a real population).
Getting COVID-19 vaccination depends on age but also on other circumstances (severe allergic reaction to a component of the COVID-19 vaccine, moderate or severe acute illness, etc.) and this is modeled by 
$U_2$. So $X$ is a function of $Z$ and $U_2$.
Finally, $Y$ depends on age, being vaccinated, and
on further circumstances like having other diseases, which are modeled by $U_3$. So $Y$ is a function of $Z$, $X$, and $U_3$. The functions $F_i$ define a directed graph structure on the random (observed) variables: there is an edge from $A$ to $B$ if $B$ depends on $A$. We always assume that
dependency graph of the SCM is acyclic. This property is also called \emph{semi-Markovian}.
In our example, we get the DAG:
\begin{tikzpicture}[baseline=-.3em] 
\node (s) at (0,0) {$Z$};
\node (t) at (1.1,0) {$X$};
\node (r) at (2.2,0) {$Y$.};
\draw [->] (t) -- (r);
\draw [->] (s) -- (t);
\draw [->] (s) edge [bend left=27] (r);
\end{tikzpicture}\\

%
\noindent{\bf Layer 1 (probabilistic).} Empirical sciences primarily rely on observed data, typically represented as probability distributions over measurable variables. In our example, this corresponds to the distribution 
$P(Z,X,Y)$. The unobserved variables $U_1,U_2,U_3$ along with the causal mechanism $\cF$, remain hidden from direct observation. A researcher thus receives the probabilities probabilities (shown in Fig.~\ref{fig:ex:1}$(b)$)
 $P(z,x,y)=\sum_{\bu}\delta_{\cF,\bu}(z,x,y)\cdot P(\bu)$, where vectors
 $\bu=(u_1,u_2,u_3)\in \{0,1\}^3$~and
 $\delta_{\cF,\bu}(z,x,y)=1$~if $\compactEquals{F_1(u_1)=z, F_2(z,u_2)=x,}$ 
 and 
 $\compactEquals{F_3(z,x,u_3)=y}$; otherwise
 $\delta_{\cF,\bu}(z,x,y)=0.$
The relevant query in our scenario $\pp{Y=1\mmid X=1}$ can be evaluated as
$\pp{Y=1\mmid X=1}=\pp{Y=1,X=1}/\pp{X=1}=0.5106/0.558\approx 0.915$  which says 
that the probability for recovery ($\compactEquals{Y=1}$) is high
given that the patient was vaccinated ($\compactEquals{X=1}$).
On the other hand, the query for $\compactEquals{X=0}$ can be evaluated as
$\pp{Y=1\mmid X=0}=0.0442/0.442=0.1$, which
may lead to the opinion that the vaccine is relevant to recovery. However, these results do not reflect a randomized controlled trial \cite{fisher1936design} that could establish the differences between a COVID-19 vaccination and a placebo vaccination.\\

\noindent{\bf Layer 2 (interventional).} Consider a randomized  trial in which each patient is vaccinated,
denoted as $\compactEquals{\dop(X=1)}$, regardless of age ($Z$) and other conditions ($U_2$). 
We model this by performing a hypothetical intervention in which we replace in $\cF$ 
the mechanism $F_2(Z,U_2)$ by the constant function $1$ and leaving the remaining functions unchanged (see, Fig.~\ref{fig:ex:1}$(c)$).
If $\cF_{X=1} \ =\{\compactEquals{F'_1=F_1,F'_2=1,F'_3=F_3}\}$ 
denotes the new mechanism, then the \emph{post-interventional} 
distribution
$\pp{[X=1]Z,Y}$ is specified as 
$\pp{[X=1]z,y}=\sum_{\bu}\delta_{\cF_{X=1},\bu}(z,y)\cdot P(\bu),$ where 
$\delta_{\cF_{X=1},\bu}$ denotes function $\delta$ as 
above, but for the new 
mechanism $\cF_{X=1}$ (the distribution is shown in Fig.~\ref{fig:ex:1}$(d)$)\footnote{A common and popular notation for the post-interventional probability 
is $\pp{Z,Y\mmid \compactEquals{\dop(X=1)}}$. In this paper,
we use the notation $\pp{[\compactEquals{X=1}]Z,Y}$ since
it is more convenient for analyses involving counterfactuals.}.
To determine the causal effect of the COVID-19 vaccination  on recovery, we 
compute, in an analogous way, the distribution $\pp{[X=0]Z,Y}$ 
after  the intervention $\compactEquals{\dop(X=0)}$, which means that all patients receive a placebo vaccination. 
Then, comparing the value 
$\pp{[X=1]Y=1}=0.95$ with $\pp{[X=0]Y=1}=0.0874,$
we can conclude that $\pp{[X=1]Y=1} - \pp{[X=0]Y=1} >0$. 
This can be
interpreted as a positive  (average) effect of the vaccine in the population. 
Note that it is not obvious how to compute 
the~post-interventional distributions 
from the observed probability $P(Z,X,Y)$; Indeed, this is 
a challenging task in the field of causality. \\

\noindent{\bf Layer 3 (counterfactual).} The key phenomena that can be modeled and analyzed at this level 
are counterfactual situations. Imagine, e.g., in our scenario, there is a group of patients who have not been vaccinated
and died $\compactEquals{(X = 0, Y = 0)}$.
One may ask, what the outcome would be  $Y$ had they been vaccinated
$\compactEquals{(X = 1)}$. In particular,  one can ask what the probability of recovery would be 
if we had vaccinated patients in this group.
Using the formalism of  Layer~3, we can express this as 
a counterfactual query:
$\pp{[X=1]Y=1 \mmid X=0,Y=0}= \pp{[X=1](Y=1) \wedge (X=0,Y=0)} / \pp{X=0, Y=0}.$
Note that  the event $\compactEquals{[X=1](Y=1) \wedge (X=0,Y=0)}$ incorporates 
simultaneously two counterfactual mechanisms: $\cF_{X=1}$ and~$\cF$. 
This is the key difference to Layer 2, where we can only have one.

\subsection{Syntax and semantics of probabilistic calculi}
\label{sec:appendix:formal:definitions:syntax:and:semantics}

In this section, we describe the underlying formalism of the satisfiability problems that we consider. This will cover the first three parameter dimensions: the underlying arithmetic, with or without compact marginalization, and the PCH level. The other two dimensions then naturally follow since they are obtained by restricting the underlying model.

We always consider discrete distributions 
in the probabilistic and causal languages studied in this paper. We
represent the values  of the random variables as $\mathit{Val} = \{0,1,\myldots, \maxvaluecount - 1\}$ 
and denote by $\bX$ the set of random variables used in a system. 
Here, we use bold letters for sets of variables or sets of values.
By capital letters $X_1,X_2, \myldots$, we denote the individual variables 
and assume, w.l.o.g.,~that they all share the same domain  $\mathit{Val}$.
A value of $X_i$ is often denoted by the corresponding lowercase letter $x_i$ or a natural number.
In this section, we describe the syntax and semantics of the languages starting 
with probabilistic ones and then we provide extensions to the causal systems.

By an \emph{atomic} event, we mean an event of the form $X=x$, where 
$X$ is a random variable and $x$ is a value in the domain of $X$. 
The language $\Eprop$ of propositional formulas over atomic events is defined 
as the closure of such events under the Boolean operators $\wedge$ and $\neg$. 
To specify the syntax of interventional and counterfactual events, we 
define the intervention and extend the syntax of $\Eprop$ to $\Epint$ and $\Ecounter$,
respectively,
using the following grammars:
\[
\begin{array}{rccl}
 \mbox{$\Eprop$ is defined by} 	& \bp 	& ::= & X = x  \mid \neg \bp \mid \bp \wedge \bp
 	 \\[1mm]
\mbox{$\Eint$ is defined by}    	& \bi		& ::= & \top \mid  X = x       \mid  \bi \wedge \bi 
	 \\[1mm]
\mbox{$\Epint$ is defined by} 	&\bp_{\bi} 	& ::= & [\, \bi\, ]\, \bp \\[1mm]
  \mbox{$\Ecounter$ is defined by}	& \bc 	& ::= & \bp_{\bi}  \mid \neg \bc \mid \bc \wedge \bc.
\end{array}
\]
Note that since $\top$ means that no intervention has been applied,
we can assume that $\Eprop \subseteq \Epint$.

The PCH 
consists of three languages 
$\cL_1, \cL_2,\cL_3$, each of which is based on terms of the form $\PP{\delta}$.
For the (observational or associational) language $\cL_1$, we have  $\delta\in \Eprop$,
for the (interventional) language $\cL_2$,  we have $\delta\in \Epint$ and 
for the (counterfactual) language $\cL_3$, $\delta\in \Ecounter$. 
The expressive power and computational complexity properties of the languages depend 
largely on the operations that we are allowed to apply to the basic terms.
Allowing gradually more complex operators, we describe 
the languages which are the subject of our studies below. We start with the description of
the languages $\cT_i^*$ of terms, with $i=1,2,3$, using the following grammars\footnote{In the given grammars, we omit the brackets for readability, but we assume that they can be used in a standard way.}
\[
\begin{array}{ll c|c ll}

    \multirow{1.5}{*}{$\Ticomp$}  & \multirow{1.5}{*}{$\bt::=\PP{\delta_i}$}   &\multirow{1.5}{*}{}&\multirow{1.5}{*}{}&  
        \multirow{1.5}{*}{$\Ticompsum$}  & \multirow{1.5}{*}{$\bt::=\PP{\delta_i}   \mid  \mbox{$\sum_{x} \bt$} $} \\[1.6mm]
    \Tilin  & \bt::=\PP{\delta_i}    \mid \bt +\bt       &&&
    	 \Tilinsum  & \bt::=\PP{\delta_i}   \mid \bt +\bt  \mid  \mbox{$\sum_{x} \bt$} \\[1mm]
   \Tipoly & \bt::=\PP{\delta_i} \mid \bt+\bt  \mid -\bt \mid \bt \cdot \bt &&&
   \Tipolysum &\bt::= \PP{\delta_i} \mid \bt + \bt  \mid -\bt \mid \bt \cdot \bt \mid 
 \mbox{$\sum_{x} \bt$} 
 
   \end{array}
\]
where $\delta_1$ are formulas in $\Eprop$, $\delta_2\in\Epint$, 
$\delta_3\in \Ecounter$.

The probabilities of the form $\PP{\delta_i}$ 
are called \emph{primitives} or \emph{basic terms}.
In the summation operator $\sum_{x}$, we have 
a dummy variable $x$ 
which ranges over all values $0,1,\ldots, \maxvaluecount - 1$.
The summation $\sum_{x} \bt$ is a purely syntactical 
concept which represents the sum 
$\bt[\sfrac{0}{ x}]  +\bt[\sfrac{1}{x}]+\myldots +\bt[\sfrac{\maxvaluecount - 1}{x}]$,
whereby $\bt[\sfrac{v}{x}]$, we mean the expression in which all occurrences of $x$
are replaced with value $v$.
For example,  for  $\mathit{Val} = \{0,1\}$,
the expression
 $\sum_{x} \PP{Y=1, X=x}$
 semantically represents $\PP{Y=1, X=0} + \PP{Y=1, X=1}$.

We note that the dummy variable $x$ is not a (random) variable in the usual sense
and that its scope is defined in the standard way.

In the table above, the terms in $\Ticomp$ are just basic probabilities with the events given by the corresponding languages $\Eprop$, $\Epint$, or $\Ecounter$. 
Next, we extend terms by being able to compute sums of probabilities and by 
adding the same term several times, we also allow for weighted sums with 
weights given in unary. Note that this is enough to state all our hardness
results. All matching upper bounds 
also work when we allow for explicit weights given in binary.
In the case of $\Tipoly$, we are allowed to build polynomial terms in the 
primitives. On the right-hand side of the table, we have the same
three kinds of terms, but to each of them, we add a marginalization operator
as a building block.

The polynomial calculus $\Tipoly$ was originally introduced by Fagin, Halpern, and Megiddo \citep{fagin1990logic}
(for $i = 1$) 
to be able to express conditional probabilities by clearing denominators. While this works 
for $\Tipoly$, this does not work in the case of $\Tipolysum$,
since clearing denominators with exponential sums creates expressions that
are too large. But we could introduce basic terms 
of the form $\PP{\delta_i \mmid \delta}$ with $\delta \in \Eprop$
explicitly. All our hardness proofs work without conditional probabilities
but all our matching upper bounds are still true with explicit
conditional probabilities.

Expressions like
$\PP{X=1} + \PP{Y=2} \cdot \PP{Y=3}$
are valid terms in $\Tprobpoly$ and 
$\sum_z\PP{[X=0](Y=1, Z=z)}$ and $\sum_z\PP{([X=0]Y=1), Z=z}$ are
valid terms in the language  
$\Tcausalpolysum$, for example.

Now, let 
$
\Lab= \{
\text{base}, \text{base}\langle{\Sigma}\rangle, 
\text{lin}, \text{lin}\langle{\Sigma}\rangle,     
\text{poly}, \text{poly}\langle{\Sigma}\rangle
\}
$
denote the labels of all variants of languages. Then for each   $*\in\Lab$ and $i=1,2,3$, we define
the languages $\Listar$ of Boolean combinations of inequalities in a standard way: 
\begin{align*}
   & \mbox{$\Listar$ is defined by}\quad  \bff ::= \bt \le \bt' \mid \neg \bff \mid \bff \wedge \bff , \quad
    \mbox{where $ \bt,\bt'$ are terms in  $ {\cal T}^{*}_{i} $.}
\end{align*}

Although the language and its operations can appear rather restricted, all the usual elements of probabilistic and causal formulas can be encoded. Namely, equality is encoded as greater-or-equal in both directions, e.g. 
$\PP{x} = \PP{y}$ means $\PP{x} \geq \PP{y} \wedge \PP{y} \geq \PP{x}$.

The number~$0$ can be encoded as an inconsistent probability, 
i.e., $\PP{X=1 \wedge X=2}$. 
In a language allowing addition and multiplication, any positive integer can be easily encoded
from the fact $\PP{\top} \equiv 1$, e.g. $4 \equiv (1 + 1) (1 + 1) \equiv (\PP{\top} + \PP{\top}) (\PP{\top} + \PP{\top})$.
If a language does not allow multiplication, one can show that the encoding is still possible.
Note that these encodings barely change the size of the expressions, so allowing or disallowing these additional operators does not affect any complexity results involving these expressions.

To define the semantics of the languages,  we use  a structural causal model (SCM) as in \cite[Sec.~3.2]{Pearl2009}.
An SCM 
is a tuple $\fM=(\cF, P, $ $\bU, \bX)$, such that $\bV = \bU \cup \bX$ is a set of 
variables partitioned into 
exogenous (unobserved) variables $\bU=\{U_1,U_2,\myldots \}$ and 
endogenous variables $\bX$.
The tuple $\cF=\{F_1,\myldots,F_n\}$ consists of
functions such that function $F_i$ calculates the value of variable $X_i$ from the values 
$(\bx, \bu)$ of other variables in $\bV$ as  
$F_i(\pa_i,\bu_i)$ \footnote{We consider recursive models, 
that is, we assume the endogenous variables 
are ordered such that variable $X_i$ (i.e. function $F_i$) is not affected by any 
 $X_j$ with $j > i$. 
Here, we also use the usual notation with capital letters and lowercase letters where  $\Pa_i$  are variables and $\pa_i$ is an assignment of values to those variables.},
 where $\Pa_i\subseteq \bX$ and $\bU_i\subseteq \bU$. 
$P$ specifies a probability distribution 
of all exogenous  variables $\bU$. Since variables $\bX$  depend deterministically on 
 the exogenous variables via functions $F_i$,
 $\cF$ and $P$ obviously define 
the joint probability distribution of 
$\bX$.
 
 Throughout this paper, we assume that domains of endogenous variables $\bX$ are 
 discrete and finite. In this setting, exogenous variables $\bU$ could take values 
 in any  domain, including infinite and continuous ones.
 A recent paper \citep{zhang2022partial} shows, however,
 that any SCM over discrete endogenous variables is equivalent 
 for evaluating post-interventional probabilities to an SCM where all exogenous variables are discrete with finite
 domains.

As a consequence, throughout this paper, we assume that domains of 
exogenous variables $\bU$ are discrete and finite, too.

For any basic  $\Eint$-formula $\let\IsInPP=1 X_i=x_i$ 
(which, in our notation, means $\let\IsInPP=1 \dop(X_i=x_i)$),  we denote 
by $\cF_{X_i=x_i}$ the function obtained from $\cF$ by replacing $F_i$
with the constant function $F_i(\bv):=x_i$. 
We generalize this definition for any interventions specified by
$\alpha\in \Eint$ in a natural way and denote as 
$\cF_{\alpha}$ the resulting functions.
For any $\varphi\in \Eprop$,  we write $\cF, \bu \models \varphi$
if $\varphi$ is satisfied for values of $\bX$ calculated from the values $\bu$.
For $\alpha\in \Eint$, we write $\cF, \bu \models [\alpha]\varphi$ if  
$\cF_{\alpha}, \bu \models \varphi$. And for all $\psi,\psi_1,\psi_2\in \Ecounter$,
we write  $(i)$ $\cF, \bu \models \neg\psi$ if  $\cF, \bu \not\models \psi$ and 
$(ii)$ $\cF, \bu \models \psi_1 \wedge \psi_2$ if  $\cF, \bu \models \psi_1$
and $\cF, \bu \models \psi_2$.
  
Finally, for $\psi\in  \Ecounter$, let $S_{\fM}(\psi)=\{\bu \mid \cF, \bu \models \psi\}$.

We define $\llbracket \be \rrbracket_{\fM}$, for some expression $\be$,
recursively in a natural way, 
starting with basic terms as follows 
$\llbracket \PP{\psi} \rrbracket_{\fM} = \sum_{\bu\in S_{\fM}(\psi)}P(\bu)$
and, for $\delta\in\Eprop$, $\llbracket \PP{\psi\mmid \delta} \rrbracket_{\fM} = \llbracket\PP{\psi \wedge \delta} \rrbracket_{\fM}/ \llbracket\PP{\delta} \rrbracket_{\fM}$, assuming that the expression is undefined if  $\llbracket\PP{\delta} \rrbracket_{\fM}=0$.
For two expressions $\be_1$ and $\be_2$, we define 
$ \fM \models  \be_1 \le  \be_2$,  if and only if, 
$\llbracket \be_1 \rrbracket_{\fM}\le \llbracket \be_2 \rrbracket_{\fM}.$
The semantics for negation and conjunction are defined in the usual way,
giving the semantics for $\fM \models \varphi$ for any formula $\varphi$
in $\Lcausalstar$.

\subsection{Probabilistic and causal satisfiability problems}

The (decision) satisfiability problems for languages of the PCH, denoted by $\SATistar$,  
with $i=1,2,3$ and $*\in \Lab$,
take as input a formula $\varphi$ in 
$\Listar$ and  ask whether there exists a model 
$\fM$ such that $\fM \models\varphi$.
Analogously, the validity problems for $\Listar$ consists in deciding whether, for a given $\varphi$,
$\fM \models\varphi$ holds
for all models  $\fM$.
From the definitions, it is obvious that variants of the problems for the level $i$
are at least as hard as the  counterparts at the lower level.

In many studies, a researcher can infer information about the graph structure of the 
SCM, combining observed and experimental data with existing knowledge. 
Indeed, learning causal structures from data has been the subject of a considerable 
amount of research (see, e.g., \cite{drton2017structure,glymour2019review,squires2022causal}
for recent reviews and \cite{reisach2021beware,ng2021reliable,lu2021improving,rolland2022score,reisach2023scale} for current achievements). Thus, the problem of interest in this setting, which is a subject of this section, 
is to determine the satisfiability, resp., validity of formulas in SCMs with the additional 
requirement on the graph structure of the models.

Let $\fM=(\cF=\{F_1,\ldots,F_n\}, P, \bU, \bX=\{X_1,\ldots,X_n\})$ be an SCM.
We will assume that the models are \emph{semi-Markovian} in the general case and \emph{Markovian} in the graph constrained case\footnote{We note that 
the general semi-Markovian model,  which allows 
for the sharing of exogenous arguments and allows for arbitrary dependencies among 
the exogenous variables, can be reduced in a standard way to the Markovian model by 
introducing new auxiliary ``latent'' variables $L_1,\ldots,L_k$ which belong to the endogenous variables $\bX$ of the model, but which are non-measurable in contrast to the ``standard'' observed / measurable variables. Then we can use such latent variables as nodes of a DAG but the joint probability distribution on $\bX$ is restricted  to only the measurable variables and ignores $L_1,\ldots,L_k$.
}, 
Markovian means that the exogenous arguments $U_i, U_j$ of $F_i$, resp.~$F_j$ are independent whenever 
$i \not= j$. We define that a DAG $\cG = (\bX, E)$ represents the graph structure of $\fM$ if, 
for every $X_j$ appearing as an argument of $F_i$, $X_j \to X_i$ is an edge in~$E$.
DAG $\cG$ is called a \emph{causal diagram} of the model $\fM$ \cite{Pearl2009,bareinboim2022pearl}.
The satisfiability problems with an additional requirement on the causal diagram, 
take as input a formula $\varphi$ and a DAG $\cG$ with the task of deciding whether there exists an SCM
with the structure $\cG$ for $\varphi$. In this way, for problems $\SATistar$, we 
get the corresponding problems denoted as $\SATistargraph$, with  $*\in \Lab$.

An important feature of $\SATipoly$ instances, over all levels $i=1,2,3$, is the small 
model property which says that, for every satisfiable formula, there is a model whose size is 
bounded polynomially with respect to the length of the input. This property 
was used
to prove the memberships of $\SATipoly$ in $\NP$ and in $\existsR$
\citep{fagin1990logic,ibeling2020probabilistic,ibeling2022mosse,IBELING2024103339}.
Interestingly, membership proofs of $\SATproblinsum${} in \NP$^{\ccPP}$ and  $\SATinterventlinsum$ in $\PSPACE$  \citep{doerflerICLR2025} 
rely on the property~as~well.

Apart from these advantages, the small model property is interesting in itself.
For example, on the probabilistic layer, if a formula $\varphi$ (without the summation)
over variables $X_1,\ldots,X_n$ is satisfiable, then there exists a Bayesian network 
$\cB=(\cG,P_{\cB})$, with a DAG $\cG$ over nodes $X_1,\ldots,X_n$ and 
conditional probability tables $P_{\cB}$ for each variable $X_i$, such that $\varphi$ is true 
in $\cB$ and the total size of the tables is bounded by a polynomial of  $|\varphi|$.

These have motivated the introduction of the small-model problem, denoted as 
$\SATprobpolysumsm$, which is defined like $\SATprobpolysum$ with the additional constraint 
that a satisfying distribution should only have polynomially large support, that is, only polynomially 
many entries in the exponentially large table of probabilities are non-zero \citep{blaser2024existential}.
In the paper, the authors achieve this by extending an instance with an additional unary input $p \in \IN$
and requiring that the satisfying distribution has a support of size at most $p$.
We define $\SATinterventpolysumsm$ and $\SATcausalpolysumsm$ in a similar way.
Formally, we use the following:
\begin{definition}\label{lab:def:small:model:new:U}
	The decision problems $\SATipolysumsm$, with $i=1,2,3$, take as input a formula 
	$\varphi \in \Lipolysum$  
	and a unary encoded number $p \in \IN$
	and ask whether there exists a model  $\fM=(\cF, P, \bU=\{U_1,\ldots,U_m\}, \bX)$ 
	such that $\fM \models\varphi$ and
 	$\ \#\{(u_1,\ldots,u_m): P(U_1=u_1,\ldots, U_m=u_m) > 0\}\le p.$
\end{definition}

\subsection{The (succinct) existential theory of the reals}

For two computational problems $A,B$, we will write $A\leqp B$ if $A$ can be reduced to $B$ in polynomial time, which means $A$ is not harder to solve than $B$. A problem $A$ is complete for a complexity class $\cC$, if $A \in \cC$ and, for every other problem $B\in\cC$, it holds $B\leqp A$. By ${\tt co}\mbox{-}\cC$, we denote the class of all problems
$A$ such that its complements $\overline{A}$ belong to $\cC$.

To measure the computational complexity of $\SATistar$, a central role play the
following, well-known Boolean complexity classes $\NP, \PSPACE, \NEXP,$ and $\EXPSPACE$   
(for formal definitions see, e.g., \cite{arora2009computational}).
Recent research has shown that the precise complexity of several natural 
satisfiability problems can be expressed in terms of the classes over the real numbers 
$\existsR$ and $\succR$.  Recall, that the existential theory of the reals $(\ETR)$ is the set of true sentences
of the form
\begin{equation} \label{eq:etr:1}
   \exists x_1 \dots \exists x_n \varphi(x_1,\dots,x_n),
\end{equation}
where $\varphi$ is a quantifier-free Boolean formula over the basis $\{\vee, \wedge, \neg\}$
and a signature consisting of the constants $0$ and $1$, the functional symbols
$+$ and $\cdot$, and the relational symbols $<$, $\le$, and $=$. The sentence
is interpreted over the real numbers in the standard way.
The theory forms its own complexity class $\exists \mathbb{R}$ which is 
defined as the closure of the ETR under polynomial time many-one reductions
\citep{grigoriev1988solving,existentialTheoryOfRealsCanny1988some,existentialTheoryOfRealsSchaefer2009complexity}.
A succinct variant of $\ETR$, denoted as  $\succETR$, and the corresponding 
class $\succR$, have been introduced in \cite{zander2023ijcai}.
$\succETR$ is the set of all Boolean circuits $C$ that encode 
a true sentence as in $(\ref{eq:etr:1})$ as follows. Assume that 
$C$ computes a function $\{0,1\}^N \to \{0,1\}^M$. Then $\{0,1\}^N$ 
represents the node 
set of the tree underlying $\varphi$
and $C(i)$ is an encoding of the description of node $i$, consisting of the label
of $i$, its parent, and its two children. The variables in 
$\varphi$ 
are $x_1,\dots,x_{2^N}$. As in the case of $\existsR$, to 
$\succR$ belong all languages 
which are polynomial time many-one reducible to $\succETR$. $\existsR$ and $\succR$ can also be defined in terms of real RAMs \cite{erickson2022smoothing, blaser2024existential}. These two classes correspond to nondeterministic polynomial and exponential time, respectively, on  such machines.

In a breakthrough result, Renegar \cite{renegar1992computational} showed that $\ETR \in \PSPACE$.
In addition, his result also shows that even if the formula contains an exponential number of arithmetic terms of exponential size with exponential degree, it is still possible to solve the formula in $\PSPACE$, as long as the number of variables is polynomially bounded.

\section{Results}
\label{sec:results}

Intuitively, one would expect that when we make the arithmetic of the underlying satisfiability problem more powerful, then the complexity of the problem will also increase. The same is true when we go up in the PCH. While this is sometimes the case, we will also see many interesting cases where the complexity remains unchanged when we change a certain parameter. Our main focus will be on the effect of constraining the model on the complexity of the problem. As mentioned earlier, we will always assume a compact marginalization operator since this is a very natural operation that frequently appears in expressions involving probabilities. Furthermore, without compact marginalization, everything follows from the results \cite{fagin1990logic, ibeling2022mosse} since the models appearing in their proofs are already very constrained (see \procOrFullVersion{the full version}{Appendix~\ref{app:expanded}}). 


\subsection{Previous work}

\begin{wraptable}{r}{0.58\textwidth}
\small
\begin{tabular}{|c    | c@{\hskip 0.05cm} | c@{\hskip 0.05cm} |c@{\hskip 0.0cm}   |}
\hline
 \multicolumn{1}{|c|}{\multirow{2}{*}{ Terms}} & 
  \multirow{2}{*}{$\cL_1$~(prob.)} &  \multirow{2}{*}{$\cL_2$~(interv.)} &  \multirow{2}{*}{$\cL_3$~(count.)  }   \\ 
  &&&
  \\
  \hline \hline 
&\multicolumn{3}{c|}{}\vspace*{-3.5mm}\\
\multicolumn{1}{|c|}{\multirow{1.2}{*}{basic}}
		&  \multicolumn{3}{c|}{\multirow{2.4}{*}{$~\NP$ ~~$(a)$}} 
		\\  \cline{1-1} 
\multicolumn{1}{|c|}{\multirow{1.2}{*}{lin} }
&  \multicolumn{3}{c|}{\multirow{2}{*}{}} 
				\\ \cline{1-1} \cline{2-4} 
\multicolumn{1}{|c|}{\multirow{2}{*}{poly}}
		&  \multicolumn{3}{c|}{\multirow{2}{*}{$
        \exists\R$~~~$(b)$}}
			\\
		 &\multicolumn{3}{c|}{}  \\  
     \hline \hline 
\multicolumn{1}{|c|}{\multirow{1.2}{*}{basic \& marg.}} & \multirow{2.4}{*}{$\NP^{\ccPP}$ $(c)$} & \multirow{2.4}{*}{$\PSPACE$ $(c)$} & 		
		\multirow{2.4}{*}{$\NEXP$ $(c)$}  
		\\ \cline{1-1} 
\multicolumn{1}{|c|}{ \multirow{1.2}{*}{lin \& marg.} }
& \multirow{2.4}{*}{} & \multirow{2.4}{*}{} & \multirow{2.4}{*}{}
 		\\  \cline{1-1}\cline{1-4}
\multicolumn{1}{|c|}{ \multirow{2}{*}{poly \& marg.} }
		& \multicolumn{3}{c|}{\multirow{2}{*}{ $
        \succR$~~~$(d)$}}\\
		\multicolumn{1}{|c|}{}&\multicolumn{3}{c|}{}  \\  \hline		
\end{tabular}\\[2mm]
\caption{The complexity landscape for \emph{unconstrained models}.
Sources: $(a)$ for $\cL_1$ \citep{fagin1990logic}, for $\cL_2$ and $\cL_3$ \citep{ibeling2022mosse},
$(b)$ \cite{ibeling2022mosse},
$(c)$ \cite{doerflerICLR2025},
$(d)$ for $\cL_1$ and $\cL_2$ \citep{zander2023ijcai}, for $\cL_3$ \cite{ibeling2024probabilistic,doerflerICLR2025}.
\label{fig:unconstrained}
}
\end{wraptable}

For unconstrained models, we have a clear picture of the satisfiability landscape, see Table~\ref{fig:unconstrained}. Fagin et al. \cite{fagin1990logic} prove the $\NP$-completeness for the probabilistic layer with basic and linear terms and the $\NP$-hardness with polynomial terms. Moss{\'e} et al.\ \cite{ibeling2022mosse} prove that the latter is in fact complete for the existential theory of the reals $\existsR$. They also prove that the same complexity results also hold for the interventional and counterfactual layers of the PCH. This means that the layer of the PCH does not have any impact on the complexity of the corresponding satisfiability problem. Things however change, when we add a marginalization operator. As shown in \cite{doerflerICLR2025} if the arithmetic is basic or linear, then the satisfiability problem is complete for $\NP^{\ccPP}$, $\PSPACE$, and $\NEXP$ when the PCH layer increases from probabilistic to counterfactual. If, however, the arithmetic is polynomial, then all three layers have the same complexity again, they are complete for the new class $\succR$ \citep{zander2023ijcai,ibeling2024probabilistic,doerflerICLR2025}.

\subsection{Our results}

As our main result, we complete the complexity landscape of the probabilistic and causal satisfiability problems when we either constrain the graph structure of the SCM or when we force the model to be small.

\begin{wraptable}{r}{0.68\textwidth}
\small
\begin{tabular}{|c    | c@{\hskip 0.05cm} | c@{\hskip 0.05cm} |c@{\hskip 0.0cm}   |}
\hline
 \multicolumn{1}{|c|}{\multirow{2}{*}{ Terms}} & 
  \multirow{2}{*}{$\cL_1$~(prob.)} &  \multirow{2}{*}{$\cL_2$~(interv.)} &  \multirow{2}{*}{$\cL_3$~(count.)  }   \\ 
  &&&
  \\
  \hline \hline 
\multicolumn{1}{|c|}{\multirow{1.2}{*}{basic \& marg.}} & \multirow{2.4}{*}{$(\NP^{\ccPP} \cup \existsR)$-hard $(a)$} & \multicolumn{2}{c|}{\multirow{2.4}{*}{$\NEXP$ $(b)$}} 
		\\ \cline{1-1} 
\multicolumn{1}{|c|}{ \multirow{1.2}{*}{lin \& marg.} }
& \multirow{2.4}{*}{} & \multicolumn{2}{c|}{\multirow{2.4}{*}{}} 
 		\\  \cline{1-1}\cline{1-4}
\multicolumn{1}{|c|}{ \multirow{2}{*}{poly \& marg.} }
		& \multicolumn{3}{c|}{\multirow{2}{*}{ $
        \succR$~~~$(c)$}}\\
		\multicolumn{1}{|c|}{}&\multicolumn{3}{c|}{}  \\  \hline		
\end{tabular} \\[2mm]
\caption{The complexity landscape for \emph{constrained graph structures}.
Sources: $(a)$ Lemma~\ref{prop:prob:comp:sum:graph:er:hard}, Proposition~\ref{prop:DAG:notharder}, and \cite{doerflerICLR2025}, $(b)$ Theorem~\ref{thm:SATinterventcompsumgraph:ccNEXP-complete}
$(c)$ Proposition~\ref{prop:DAG:notharder} and Theorem~\ref{thm:causal_with_graph:reduction}. 
\label{fig:graph}
}
\end{wraptable}
\noindent As the first part of our main results, we classify the difficulty of the satisfiability problems when the graph structure is given as a part of the input. We will show that these cases are always at least as hard as the corresponding problems without specified graph structures. This is due to the fact that we can always take the complete acyclic graph. 
The most notable difference to the unconstrained problems happens at the interventional layer. Here we have an increase in complexity from $\PSPACE$ to $\NEXP$. 
We also compare our results to the model checking problem. Here we are not only given a graph structure but the full SCM and we need to check whether the model satisfies the input formula. It turns out that model checking is considerably easier. For instance, when the arithmetic is polynomial with marginalization, then even at the probabilistic layer, the satisfiability problem is $\succR$-complete, while the corresponding model checking problem is in $\NP^\ccPP$ (Theorem \ref{thm:SATproblinbBN:DAG}).

\begin{wraptable}{r}{0.58\textwidth}
\small
\begin{tabular}{|c    | c@{\hskip 0.05cm} | c@{\hskip 0.05cm} |c@{\hskip 0.0cm}   |}
\hline
 \multicolumn{1}{|c|}{\multirow{2}{*}{ Terms}} & 
  \multirow{2}{*}{$\cL_1$~(prob.)} &  \multirow{2}{*}{$\cL_2$~(interv.)} &  \multirow{2}{*}{$\cL_3$~(count.)  }   \\ 
  &&&
  \\
  \hline \hline 
\multicolumn{1}{|c|}{
  \multirow{1.2}{*}{basic \& marg.}} & \multirow{2.4}{*}{~$\NP^\ccPP$~$(a)$} 
  & \multirow{2.4}{*}{$\PSPACE$~$(a)$} 
  & \multirow{2.4}{*}{$\NEXP$~$(a)$}  
		\\ \cline{1-1} 
\multicolumn{1}{|c|}{ \multirow{1.2}{*}{lin \& marg.} }
& \multirow{2.4}{*}{} & \multirow{2.4}{*}{} & \multirow{2.4}{*}{}
 		\\  \cline{1-1}\cline{1-4}
\multicolumn{1}{|c|}{ \multirow{2}{*}{poly \& marg.} }
		& \multicolumn{1}{c|}{\multirow{2}{*}{$\exists\R^{\Sigma}$~$(b)$}} & \multicolumn{2}{c|}{\multirow{2}{*}{ \NEXP~$(c)$}}\\
		\multicolumn{1}{|c|}{}& \multicolumn{1}{c|}{} &\multicolumn{2}{c|}{}  \\  \hline		
\end{tabular}\\[2mm]
\caption{The complexity landscape for \emph{satisfiability with small models}.  
Source: $(a)$ Lemma~\ref{lemma:small:model:in:linear} and \citep{doerflerICLR2025}, $(b)$ Fact~\ref{fact:small:model:sat:observed:unobserved} and \citep{blaser2024existential}, and $(c)$ Theorem~\ref{lab:lemma:interv:counter:smconj}
\label{fig:smallmodel}.
}
\end{wraptable}
Secondly, we consider the satisfiability problems with the small model property and investigate how the complexity increases across the PCH, in the presence of summation operators.  
In Definition~\ref{lab:def:small:model:new:U}, we defined the small model property for SCMs. 
In the original work of Fagin et al., there was no underlying SCM since they only dealt with the probabilistic layer, and a small model meant that we only have polynomially many entries in the table of the joint probability distribution of the (observed) variables. Since each observed variable is a function in the unobserved variables, both definitions are equivalent for semi-Markovian models as we show in Fact~\ref{fact:small:model:sat:observed:unobserved}. 
Most notably, with polynomial arithmetic, at the probabilistic layer, having a small model reduces the complexity to $\exists\R^{\Sigma}$, the existential theory of the reals enhanced with a compact summation operator. 
This was shown in \citep{blaser2024existential} for a probabilistic model without an underlying SCM and with Fact~\ref{fact:small:model:sat:observed:unobserved} we show that it still holds for Definition~\ref{lab:def:small:model:new:U}.
As soon as we are at the interventional or even counterfactual layer, then the complexity jumps, but only to $\NEXP$. Thus when we have a small model, the problem is hard for a Boolean class, although we have polynomial arithmetic and would have expected that the problem is hard for some theory over the reals.
For linear arithmetic with a compact summation operator, it is known that the complexity increases along the PCH, and we show that the corresponding proofs of \citep{doerflerICLR2025} are still valid if one requires a small-model property. 
The results are summarized in Figure~\ref{fig:smallmodel}.


\section{Satisfiability and validity with requirements on the graph structure of SCMs}
\label{sec:given:DAG:structure}

\subsection{Probabilistic layer}
\label{sec:given:DAG:structure:BNs}
A Bayesian network (BN) $\cB=(\cG,P_{\cB})$ consists of a DAG $\cG$ over nodes representing random variables $X_1,\ldots,X_n$ and a distribution $P_{\cB}$ which factorizes over $\cG$, meaning that the joint probability $\compactEquals{P_{\cB}(x_1,\myldots,x_n)}$ can be written as a product $\prod_{i=1}^n  \compactEquals{P_{\cB}(x_i\mid \pa_i)}$, where $\pa_i$ denotes the values $x_{j_1},\ldots,x_{j_k}$ of parents  $\Pa_i$ of $X_i$ in $\cG$. Then the distribution $P_{\cB}$ is specified as a set of conditional probability tables for each 
variable $X_i$ conditioned on its parents in~$\cG$~\cite{koller2009probabilistic}. So, for example, in the case of binary variables, even if the array representing the joint distribution has $2^n$ non-zero elements, it can still be compactly represented by a set of arrays of total size $\sum_{i=1}^n 2^{|\mbox{\footnotesize \Pa}_{i}|}$. This representation has many additional advantages, for example, enabling efficient probabilistic inference.

In the case of purely probabilistic languages, the problem $\SATprobstargraph$ can be read as follows:
given a DAG $\cG$ over nodes $\bX$ of a Bayesian network (but not being given the distribution $P_{\cB}$) and a formula $\varphi$ 
in the language $\Lprobstar$ decide whether there exists $P_{\cB}$ such that $\varphi$
is true in some BN $\cB=(\cG,P_{\cB})$, where $P_{\cB}$ factorizes over $\cG$.
 
We start with the observation that $\SATprobstar$ is not harder than $\SATprobstargraph$ for any $*\in \Lab$. 
\begin{proposition} \label{prop:DAG:notharder}
\label{fact:sat:sat_with_graph}
    For all  $*\in \Lab$ and 
    for any $\varphi\in \Lprobstar$ over variables $X_1,\ldots,X_n$,
    it is true that $\varphi\in\SATprobstar$ if and only if $(\varphi,\cG_n)\in\SATprobstargraph$,
    where $\cG_n$ denotes a complete DAG over nodes $X_1,\ldots,X_n$,
    with edges $X_i \to X_j$ for all $1\le i < j \le n$.
  \end{proposition}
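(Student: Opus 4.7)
The approach is to exploit the fact that the complete DAG $\cG_n$ imposes no conditional independence constraints on the joint distribution of the endogenous variables, combined with the observation that the semantics of formulas in $\Lprobstar$ depend only on the joint distribution on $\bX=\{X_1,\ldots,X_n\}$. Thus the graph constraint ``be compatible with $\cG_n$'' is vacuous at the probabilistic layer, and the two sides of the equivalence collapse to the same condition on the induced joint distribution.

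The $(\Leftarrow)$ direction is immediate: any SCM witnessing $(\varphi,\cG_n)\in\SATprobstargraph$ is, in particular, an SCM witnessing $\varphi\in\SATprobstar$. For $(\Rightarrow)$, I would take an SCM $\fM$ with $\fM\models\varphi$ and let $P$ denote the joint distribution it induces on $\bX$. By the chain rule,
$$P(x_1,\ldots,x_n)=\prod_{i=1}^n P(x_i\mid x_1,\ldots,x_{i-1}),$$
and since the parents of $X_i$ in $\cG_n$ are exactly $\{X_1,\ldots,X_{i-1}\}$, this factorization is consistent with $\cG_n$. I would then construct a Markovian SCM $\fM'=(\cF',P',\bU',\bX)$ with causal diagram $\cG_n$ by introducing, for each $i$, a fresh independent exogenous variable $U_i'$ on a finite domain sized to index all rows of the conditional table $P(X_i\mid X_1,\ldots,X_{i-1})$, and defining $F_i'(\pa_i,u_i')$ so that the conditional of $X_i$ given $\Pa_i=\pa_i$ equals $P(X_i\mid x_1,\ldots,x_{i-1})$ (with an arbitrary but consistent choice on those rows where the conditional is undefined because $P(x_1,\ldots,x_{i-1})=0$). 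Finiteness of the $U_i'$ is fine since $\bX$ has finite domain, so only finitely many distinct conditional rows must be encoded.

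To conclude, I would note that every term in $\Lprobstar$ is built by $+$, $\cdot$, and $\sum$ from basic probabilities $\PP{\delta}$ and conditionals $\PP{\delta\mid\delta'}$ with $\delta,\delta'\in\Eprop$, each of which depends only on the joint distribution of $\bX$. Since $\fM$ and $\fM'$ induce the same joint distribution on $\bX$, $\llbracket t\rrbracket_{\fM}=\llbracket t\rrbracket_{\fM'}$ for every term $t$ occurring in $\varphi$, so $\fM'\models\varphi$ and hence $(\varphi,\cG_n)\in\SATprobstargraph$.

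I do not expect any serious obstacle here; the argument is essentially the standard chain-rule construction of a Bayesian network on a total ordering, adapted so that the resulting network is produced by a Markovian SCM as required by the problem definition. The only minor point of care is the handling of zero-measure conditioning events (for conditional-probability terms), which is resolved by the fact that we match the joint distribution exactly, so any such conditional is undefined under $\fM'$ iff it is undefined under $\fM$.
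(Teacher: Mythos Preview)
Your proposal is correct and follows essentially the same route as the paper: both directions rest on the chain rule showing that any joint distribution on $\bX$ factorizes according to the complete DAG $\cG_n$, with arbitrary values filled in where the conditioning event has probability zero. The only cosmetic difference is that the paper phrases the construction at the Bayesian-network level (a graph plus conditional probability tables), whereas you spell out the corresponding Markovian SCM with explicit exogenous variables; since Section~4.1 identifies $\SATprobstargraph$ with the BN formulation, this amounts to the same argument.
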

  \begin{proof}
Assume $\varphi$  is satisfiable in a BN $\cB=(\cG,P_{\cB})$.
Let $\cB'=(\cG_n,P_{\cB'})$, where, 
for every $i$, we define the conditional probability table as
$P_{\cB'}(x_i\mid x_1,\myldots,x_{i-1}):= P_{\cB}(x_i\mid x_1,\myldots,x_{i-1})$
if $P_{\cB}(x_1,\myldots,x_{i-1})>0$; otherwise we set the value arbitrarily, 
e.g., $1/c$, where $c$ is the cardinality of set of values of $X_i$.
Due to the chain rule, we get that 
$P_{\cB'}(x_1,\myldots,x_n)=P_{\cB}(x_1,\myldots,x_n)$
which means that $\varphi$  is satisfiable in  $\cB'$.

The reverse implication is obvious.
\end{proof}

Due to their great importance in probabilistic reasoning, a lot of attention in AI research has been given to both function 
and decision problems, in which the entire 
BN $\cB=(\cG,P_{\cB})$ is given as input with the goal to answer specific queries to the distribution. 
Certainly, to the most basic ones belongs the {\sc BN-Pr} problem to compute,
for a given BN $\cB$ over $\bX$, a variable $X$ in $\bX$, and a value $x$,
the probability $\compactEquals{P_{\cB}(X=x)}$. The next important 
primitive of probabilistic reasoning
consists in finding the Maximum a Posteriori Hypothesis (MAP).
To study its computational complexity, the natural decision problem, named {\sc D-MAP},
has been investigated, which asks if, for a given rational number $\tau$, evidence $\be$, 
and some subset of variables $\bQ$, there is an instantiation $\bq$ to $\bQ$ such that 
$P_{\cB}(\bq,\be) = \sum_{\by} P_{\cB}(\bq,\by,\be)>\tau$.
It is well known that {\sc BN-Pr} is $\sharpP$-complete and 
{\sc D-MAP} is \NP$^{\ccPP}$-complete \cite{roth1996hardness,park2004complexity}.
To relate our setting to the standard Bayesian reasoning, we introduce 
a Model Checking problem,
\begin{definition}
The Bayesian Network Model Checking problem, denoted as $\SATprobstarBN$, 
gets a BN $\cB=(\cG,P_{\cB})$ and a formula $\varphi$ in $\Lprobstar$,
and verifies whether the formula $\varphi$ is satisfied in the BN $\cB$. 
\end{definition}
This problem can be considered as a generalization of the decision version of {\sc BN-Pr}.

\begin{theorem}\label{thm:SATproblinbBN:DAG}
	$\SATprobpolysumBN$ is in $\cP^{\sharpP}$.
	Equivalently, it is in $\cP^{\ccPP}$.
\end{theorem}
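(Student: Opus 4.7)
The plan is to show that for every $\Lprobpolysum$ expression $\bt$ the value $\bt(\cB)$ admits a representation $N(\bt)/D(\bt)$, where $D(\bt)$ is a positive, polynomial-time computable integer of polynomial bit length and $N(\bt)$ is a GapP function of $(\bt,\cB)$. Granted this, every atomic inequality $\bt\le\bt'$ appearing in $\varphi$ can be decided by querying a $\ccPP$-oracle about the sign of the GapP function $N(\bt')\,D(\bt)-N(\bt)\,D(\bt')$, and the Boolean connectives of $\varphi$ are combined in polynomial time. Since $\cP^{\ccPP}=\cP^{\sharpP}$, this places $\SATprobpolysumBN$ in both classes.

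Let $D_0$ be the least common multiple of the denominators of the conditional-probability entries of $\cB$; as there are only polynomially many of them, $D_0$ has polynomial bit length. Writing each entry as $P_\cB(x_i\mid \pa_i)=a_{i,\pa_i}/d_{i,\pa_i}$, a standard Bayesian-network inference argument shows that every basic probability satisfies $\PP{\delta}=N_0(\delta)/D_0^n$, where $N_0(\delta)$ is the $\sharpP$ function counting the pairs $(\bx,t)$ with $\bx\models\delta$ and $1\le t\le \prod_{i=1}^{n} a_{i,\pa_{i,\bx}}(D_0/d_{i,\pa_{i,\bx}})$. To handle general $\Lprobpolysum$ expressions I expand $\bt$ as a sum over \emph{traces}: a trace $\tau$ chooses a side at each ``+'' node and a value in $\mathit{Val}$ at each $\sum_{x}$ node of $\bt$, and thereby selects a subtree whose basic-term leaves (with the summation variables instantiated) number at most $L\le |\bt|$. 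The subtree evaluates to a signed monomial $s(\tau)\prod_{k}\PP{\delta_k(\tau)}$ whose sign $s(\tau)\in\{+1,-1\}$ is determined by the enclosing ``$-$'' nodes, and $\bt=\sum_\tau s(\tau)\prod_k \PP{\delta_k(\tau)}$ holds exactly.

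Setting $D(\bt):=D_0^{nL}$ and padding each monomial's denominator to this common value gives
\[
 N(\bt)\;=\;\sum_\tau s(\tau)\,D_0^{n(L-\ell(\tau))}\,\prod_{k}N_0(\delta_k(\tau)),
\]
where $\ell(\tau)\le L$ is the number of basic-term leaves reached by $\tau$. For each fixed $\tau$ the product has polynomially many $\sharpP$-computable factors and is therefore in $\sharpP$ by closure of $\sharpP$ under multiplication; scaling by the polynomial-size integer $s(\tau)\,D_0^{n(L-\ell(\tau))}$ keeps it in GapP. Because $\tau$ has a polynomial-length encoding, the outer summation preserves GapP by closure under polynomial-length summation, so $N(\bt)\in$ GapP as required, while $D(\bt)$ is plainly polynomial-time computable.

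The main obstacle is to tame the exponential expansion caused by nesting compact summations inside products: the closure properties of GapP do not apply directly to an arithmetic circuit with exponentially many gates. The crucial observation is that every trace has a polynomial-length encoding and reaches only polynomially many basic-term leaves, which is exactly what is needed to collapse the whole computation into a single GapP function through the three closure properties of polynomial-length summation, multiplication, and sign-flipping.
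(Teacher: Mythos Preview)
Your proof is correct and follows essentially the same approach as the paper: both expand a term $\bt$ via distributivity into an exponential sum of signed monomials in the basic probabilities, each selected by a polynomial-length ``trace'', so that after clearing the common denominator the numerator lies in GapP and its sign can be tested by a $\ccPP$ oracle. The paper's only difference is presentational---it packages the same trace decomposition as an explicit nondeterministic procedure that walks the syntax tree and uses a stack for products, rather than invoking GapP closure properties abstractly as you do.
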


Combining this lemma with \cite[Thm. 4]{doerflerICLR2025}, the result 
of \cite{park2004complexity}, and Proposition~\ref{fact:sat:sat_with_graph}, we get the following:
$$
 \SATprobpolysumBN \leqp \mbox{\sc D-MAP} \equiv_ P \SATprobcompsum \equiv_ P  \SATproblinsum  \leqp \SATprobcompsumgraph, 
 $$
which presents the relationships between the model checking,  $\mbox{\sc D-MAP}$, and satisfiability problems
for probabilistic languages  from a complexity perspective.
In the general case, i.e., in the case of polynomial languages with summation, the constraints on DAGs do not make the problems more difficult, but restricting the models to BNs leads to different complexities,  under a standard complexity assumption.

\begin{lemma}\label{prop:prob:comp:sum:graph:er:hard}
$\SATprobcompgraph$ is $ \existsR$-hard.     
\end{lemma}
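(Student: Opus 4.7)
The plan is to reduce from a normal-form $\existsR$-complete variant of $\textsc{ETR}$: feasibility of a system of constraints of the shapes $x_i = x_j \cdot x_k$, $x_i = x_j + x_k$, or $x_i = c$ for rational $c$, over variables $x_1, \ldots, x_n \in [0, 1/2]$. This restricted form is $\existsR$-hard by a standard rescaling of $\textsc{ETR}$ together with the introduction of fresh variables for every intermediate subexpression. The key idea is to exploit the graph constraint to enforce \emph{independence} among the BN's variables: products of probabilities then become joint probabilities of independent variables (a single basic term in $\Lprobcomp$), while sums become probabilities of disjoint unions of distinct values of one variable (again a basic term).

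Construct a DAG $\cG$ with \emph{no edges}, containing a binary root $V_i$ for each $x_i$, a ternary root $W_{ijk}$ for each sum constraint $x_i = x_j + x_k$, and a $q$-ary root $U_c$ for each constant $c = p/q$. Since all nodes are roots in $\cG$, in any BN compatible with $\cG$ these variables are mutually independent, so joint marginals factorise into products. The formula $\varphi \in \Lprobcomp$ is a conjunction of basic (in)equalities: for each product constraint, $\PP{V_i = 1} = \PP{V_j = 1 \wedge V_k = 1}$, whose right-hand side equals $\PP{V_j = 1} \cdot \PP{V_k = 1}$ by independence; for each sum constraint, $\PP{W_{ijk} = 1} = \PP{V_j = 1}$, $\PP{W_{ijk} = 2} = \PP{V_k = 1}$, and $\PP{V_i = 1} = \PP{\neg (W_{ijk} = 3)}$, the last right-hand side equalling $\PP{V_j = 1} + \PP{V_k = 1}$ by disjointness of the values $1$ and $2$ of $W_{ijk}$; and, for each constant $c = p/q$, pairwise equalities among the $\PP{U_c = a}$ to enforce uniformity, together with an equality between $\PP{V_i = 1}$ and the basic probability of $U_c$ taking one of a fixed $p$-element subset of its values.

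Correctness is immediate in both directions. Given a feasible assignment $(x_1, \ldots, x_n)$, set $P(V_i = 1) := x_i$, $P(W_{ijk} = 1) := x_j$, $P(W_{ijk} = 2) := x_k$, $P(W_{ijk} = 3) := 1 - x_j - x_k \ge 0$, and the uniform distribution on each $U_c$; this yields a BN compatible with $\cG$ (since any distribution on each root is allowed) that satisfies every conjunct of $\varphi$. Conversely, any satisfying BN over $\cG$ yields a solution via $x_i := P_\cB(V_i = 1) \in [0, 1/2]$. The construction runs in polynomial time.

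The main technical obstacle is justifying that the normal-form $\textsc{ETR}$ over the bounded interval $[0, 1/2]$ is $\existsR$-hard, so that the condition $x_j + x_k \le 1$ always holds and the ternary distribution on $W_{ijk}$ remains valid. This follows from a standard rescaling of $\textsc{ETR}$ combined with the introduction of fresh variables for each intermediate subexpression when reducing polynomial equations to the three-variable normal form.
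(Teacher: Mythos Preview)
Your approach is correct in spirit and shares the paper's core idea: use the graph constraint to force independence, so that a product of probabilities collapses to a single basic joint-probability term. The paper differs in two respects. First, it reduces from the concrete $\existsR$-complete problem \textsc{ETR-INV} (constraints $x_i=1$, $x_i+x_j=x_k$, $x_ix_j=1$ over $[1/2,2]$, rescaled to $[1/4,1]$), which dispenses with the ``technical obstacle'' you flag at the end. Second, it keeps the domain binary throughout: for addition it introduces two binary variables $S_{ij},S'_{ij}$ with an edge $S_{ij}\to S'_{ij}$ and imposes $\PP{S_{ij}=0,S'_{ij}=0}=\PP{X_i=0}$ and $\PP{S_{ij}=1,S'_{ij}=0}=\PP{X_j=0}$, so that $\PP{S'_{ij}=0}=x_i+x_j$. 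Your edgeless ternary gadget is an equally valid alternative, trading the edge for a larger domain; the paper's version yields the slightly sharper statement that hardness already holds with a binary domain.

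There is one real gap in your converse direction. You write ``$x_i := P_{\cB}(V_i=1)\in[0,1/2]$'', but nothing in your formula forces $P_{\cB}(V_i=1)\le 1/2$; you only get $x_i\in[0,1]$, so a satisfying BN need not witness feasibility of your ETR instance as stated (over $[0,1/2]$). Two easy fixes: either add an explicit range constraint $\PP{V_i=1}\le 1/2$ via your constant gadget (the paper does the analogous thing, adding $\PP{X_i=0}\ge 1/4$), or observe that the tighter interval was never needed---since every sum constraint has the form $x_i=x_j+x_k$ with $x_i$ itself a probability, the slack $P(W_{ijk}=3)=1-x_j-x_k$ is automatically nonnegative over $[0,1]$ as well, and then the converse direction is free. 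A further minor point: your product encoding $\PP{V_j=1\wedge V_k=1}$ equals $\PP{V_j=1}\cdot\PP{V_k=1}$ only when $j\ne k$; the paper handles squaring by cloning each $X_j$ to an independent copy $Y_j$ and you should do the same.
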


This means that $\SATprobcompsumgraph$, which is at least as hard as either {\sc D-MAP} or $\SATprobcompgraph$, is both $\NP^{\ccPP}$-hard and $\existsR$-hard. Since these are very distinct classes, one should not expect $\SATprobcompsumgraph$ to be complete for either class. 

\begin{proposition}\label{prop:SATprobstargraph:1:new1}
	$\SATprobpolysumgraph$ is $ \succR$-complete. 
\end{proposition}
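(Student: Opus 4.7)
The proof splits into $\succR$-hardness and $\succR$-membership. For hardness, I would simply invoke Proposition~\ref{fact:sat:sat_with_graph}, which (taken with the complete DAG $\cG_n$) provides a polynomial-time reduction from $\SATprobpolysum$ to $\SATprobpolysumgraph$. Since $\SATprobpolysum$ is already $\succR$-complete by \cite{zander2023ijcai} (row ``poly \& marg.'' of Table~\ref{fig:unconstrained}), $\succR$-hardness of $\SATprobpolysumgraph$ is immediate.

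For membership, I plan to construct, from an instance $(\varphi,\cG)$ with $\cG$ over $X_1,\dots,X_n$ of shared domain size $c$, a polynomial-size Boolean circuit encoding an ETR sentence in the sense of $\succETR$. The existentially quantified real variables of the sentence are $y_{i,x_i,\pa_i}$, one for each entry of a putative conditional probability table $P_\cB(X_i=x_i\mid \Pa_i=\pa_i)$; each is addressable by the polynomial-length triple $(i,x_i,\pa_i)$ since $|\Pa_i|\le n$. The sentence asserts non-negativity of all $y_{i,x_i,\pa_i}$, the local normalization $\sum_{x_i\in \mathit{Val}} y_{i,x_i,\pa_i}=1$ for every $(i,\pa_i)$, and the formula $\varphi$ after replacing every basic probability term $\PP{\delta}$ by its factorization
\[
\sum_{\bx\in \mathit{Val}^n} e_\delta(\bx)\,\prod_{i=1}^n y_{i,x_i,\pa_i(\bx)},
\]
where $e_\delta(\bx)\in\{0,1\}$ is the truth value of $\delta$ evaluated on $\bx$, and $\pa_i(\bx)$ is the restriction of $\bx$ to the parent set $\Pa_i$ of $X_i$ in $\cG$. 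Compact summation operators $\sum_x$ occurring inside $\varphi$ are preserved literally. Correctness follows directly from the definition of a BN whose distribution factorizes over $\cG$.

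The main obstacle is to verify that the resulting exponentially large syntax tree of this ETR sentence admits a polynomial-size $\succETR$ circuit, i.e.\ a circuit that, given the polynomial-length address of a node, outputs the node's label and the addresses of its parent and children in polynomial time. The Boolean envelope of $\varphi$ and its explicit summation skeletons contribute only polynomially many nodes. The genuinely exponential subtrees---the families of non-negativity and normalization constraints and the expansions of each $\PP{\delta}$---are highly uniform: given a node address, the circuit only needs to (i) identify which uniform subtree the node belongs to, (ii) decode the instantiation $\bx$ governing the relevant outer sum and inner product, (iii) evaluate the polynomial-size propositional formula $\delta$ on $\bx$ to obtain the constant $e_\delta(\bx)$, and (iv) output the address of the leaf $y_{i,x_i,\pa_i(\bx)}$. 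Each of these is polynomial-time computable, so a polynomial-size circuit suffices, placing the problem in $\succR$. Combined with the hardness direction, this yields $\succR$-completeness.
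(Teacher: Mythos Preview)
Your hardness argument via Proposition~\ref{fact:sat:sat_with_graph} is exactly what the paper does. For membership, however, the paper takes a different and much shorter route: it observes that every probabilistic formula is in particular a counterfactual one, so $\SATprobpolysumgraph \leqp \SATcausalpolysumgraph$; then Theorem~\ref{thm:causal_with_graph:reduction} gives $\SATcausalpolysumgraph \equiv_P \SATcausalpolysum$, and the latter is known to lie in $\succR$. No direct $\succETR$ encoding is constructed.

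Your direct encoding of the BN factorization as a $\succETR$ instance is a legitimate alternative and is essentially correct. It has the advantage of being self-contained (no detour through the counterfactual layer and no dependence on Theorem~\ref{thm:causal_with_graph:reduction}), at the cost of having to verify the circuit-uniformity details you sketch. One phrasing to tighten: ``compact summation operators $\sum_x$ occurring inside $\varphi$ are preserved literally'' is misleading, since $\succETR$ has no summation symbol; what you mean is that the circuit expands each $\sum_x$ into its $c$ summands, and the nesting of these expansions (together with the outer $\sum_{\bx}$ introduced for each $\PP{\delta}$) is handled uniformly by the node-address decoding. With that clarified, your argument goes through.
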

\begin{proof}
From  Theorem~\ref{thm:causal_with_graph:reduction}, we know that 
$\SATcausalpolysumgraph$ is $ \succR$-complete.
Since any probabilistic formula is a special case of a counterfactual one and 
$\SATprobpolysum$ is $ \succR$-complete, we get 
$ \SATprobpolysumgraph \leqp \SATprobpolysum$.
On the other hand, from Proposition~\ref{fact:sat:sat_with_graph}, we can conclude that 
the opposite  relation  is also true.
\end{proof}

Thus
 $\SATprobpolysum \equiv_P \SATprobpolysumgraph$
 and the problems are computationally harder than $\SATprobpolysumBN$ if $\cP^{\sharpP} \subsetneq \NEXP .$

\subsection{Interventional and counterfactual reasoning}\label{sec:given:DAG:structure:L2-3}
One of the key components in the development of structural causal models is the 
do-calculus introduced by Pearl \cite{pearl1995causal,Pearl2009}, which allows one 
to estimate causal effects from observational data. The calculus can be expressed
as a \emph{validity} problem for intervention-level languages with the graph structure 
requirements on SCMs. In particular, the rules of the calculus can be seen as 
instances of validity problems for which equivalent graphical conditions in terms 
of $d$-separation are given. For example, the ``insertion/deletion of observations''
rule requires, for a given DAG $\cG$, that, for all SCMs $\fM$ with the causal structure $\cG$,
and for all values $\bx,\by,\bz,\bw$, it holds
$\PP{[\bx]\by \, |\, [\bx](\bz,\bw)}=\PP{[\bx]\by \, |\, [\bx]\bw}$. This rule can be written in a compact way as
$\sum_{\bx,\by,\bz,\bw}(\PP{[\bx]\by \, |\, [\bx](\bz,\bw)}-\PP{[\bx]\by \, |\, [\bx]\bw})^2=0$.
In this section, we prove that, in general, the validity problem with polynomial arithmetic is very hard, namely we will see
that it is complete for ${\tt co}\mbox{-}\succR$.

To this goal, we study the complexity of the satisfiabilities with graph structure 
requirements in the case of the interventional languages.
The base and linear languages are neither weak enough to be independent of the graph structure 
nor strong enough to be able to distinguish graph structures exactly.
In particular the complexity of $\SATinterventcompsumgraph$ and $\SATinterventlinsumgraph$ 
increases above the $\PSPACE$-completeness of $\SATinterventcompsum$ and $\SATinterventlinsum$.
On the other hand, the polynomial languages are strong enough to encode the DAG structure 
of the model which implies that the complexity for these languages is the same 
as $\SATinterventpolysum$.
\begin{theorem}\label{thm:SATinterventcompsumgraph:ccNEXP-complete}
    $\SATinterventcompsumgraph$ and $\SATinterventlinsumgraph$ are $\NEXP$-complete. Thus,
    the validity problems for the corresponding languages are complete for
     ${\tt co}\mbox{-}\NEXP$.
\end{theorem}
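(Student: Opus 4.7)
The plan is to establish NEXP membership by a guess-and-evaluate argument, and NEXP-hardness by a parallel-worlds (twin-network) reduction from the counterfactual satisfiability problem $\SATcausalcompsum$, which is NEXP-complete by \cite{doerflerICLR2025}. The co-NEXP-completeness of the corresponding validity problems then follows by complementation.

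For the upper bound, given an input $(\varphi, \cG)$, a nondeterministic machine guesses a Markovian SCM $\fM = (\cF, P, \bU, \bX)$ whose causal diagram is $\cG$. Each conditional probability table $P(X_i \mid \Pa_i)$ contains at most $c^{|\Pa_i|+1}$ entries, so the SCM description may be of exponential size, which fits into the NEXP budget. Verifying $\fM \models \varphi$ reduces to evaluating, for every basic term $\PP{[\bi]\delta}$, a sum over at most $c^{|\bX|}$ outcomes via truncated factorization, and expanding each compact summation to at most $c^{\poly(|\varphi|)}$ summands. The overall evaluation runs in $2^{\poly(|\varphi| + |\cG|)}$ time, so both $\SATinterventcompsumgraph$ and $\SATinterventlinsumgraph$ lie in NEXP. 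Precision of CPT entries can be kept polynomial per entry by the same small-model/polytope-vertex argument used for the $\PSPACE$ bound on $\SATinterventcompsum$ in \cite{doerflerICLR2025}, but this is not strictly required for the NEXP containment.

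For the lower bound, I would reduce $\SATcausalcompsum$ to $\SATinterventcompsumgraph$ via a parallel-worlds DAG. The key observation about any input $\varphi \in \Lcausalcompsum$ is that although the summations in $\varphi$ may range over exponentially many intervention \emph{values}, the number of distinct intervention \emph{patterns}, i.e., subsets of variables jointly intervened on, syntactically appearing in $\varphi$ is polynomial in $|\varphi|$. After a normalization step, enumerate these patterns as $\alpha_1, \ldots, \alpha_m$ with $m = \poly(|\varphi|)$. For each pattern $\alpha_k$, introduce a copy $X_i^{(k)}$ of every variable $X_i$, together with shared randomness nodes $R_i$. Build a DAG $\cG'$ that contains (i) the edges $R_i \to X_i^{(k)}$ for every $k$, and (ii) the edges $X_j^{(k)} \to X_i^{(k)}$ whenever $X_j$ is a parent of $X_i$ in the complete DAG on the original variables (which we may assume without loss of generality by Proposition \ref{fact:sat:sat_with_graph}). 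Rewrite every basic counterfactual probability $\PP{[\alpha_{k_1}]\delta_1 \wedge \cdots \wedge [\alpha_{k_j}]\delta_j}$ in $\varphi$ as a single interventional probability over $\cG'$ in which each $[\alpha_{k_\ell}]$ acts only on the $k_\ell$-th copies of its intervened variables, and each $\delta_\ell$ is evaluated on those copies; compact summations carry over unchanged. The result is a polynomial-size interventional instance $(\varphi', \cG')$, and the same construction also handles linear arithmetic, yielding NEXP-hardness for both versions.

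The main obstacle will be proving correctness of the twin-network reduction. One must show that Markovian SCMs on $\cG'$ correspond, via the rewriting, to the parallel-worlds distributions induced by the semi-Markovian SCMs on the original variables that witness $\varphi$. Two things need verification: (i) the shared randomness nodes $R_i$ faithfully emulate the exogenous $U_i$, so that, conditioned on $R_i = r_i$, all copies $X_i^{(k)}$ become deterministic images of the same underlying function applied to pattern-$k$-specific parent values; and (ii) an intervention on $X_i^{(k)}$ does not propagate to any $X_i^{(k')}$ for $k' \neq k$, matching counterfactual semantics. A secondary task is the input normalization step that exposes all intervention patterns syntactically while preserving polynomial size, which is standard bookkeeping.
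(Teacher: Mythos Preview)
Your upper bound is in the right spirit and matches how the paper argues membership (via Theorem~\ref{thm:causal_with_graph:reduction} and the $\NEXP$ upper bound for $\SATcausalcompsum$ from \cite{doerflerICLR2025}). For hardness, however, the paper takes a quite different route: it reduces \emph{directly} from Sch\"onfinkel--Bernays satisfiability, recycling the template of the $\SATcausalcompsum$ hardness proof. The only place that proof genuinely needed $\cL_3$ was to force each relation variable $R_i$ (and each occurrence variable $R_i^j$) to depend only on its argument variables; in the graph-constrained setting this comes for free by choosing those arguments as the parents in the input DAG $\cG$. The remaining constraints---consistency of $R_i$ with each $R_i^j$, and the main sentence $\sum_{\by}\PP{[\by]\psi'}=2^n$---are already $\Linterventcompsum$ formulas.

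Your twin-network reduction has a real gap. In a Markovian SCM on $\cG'$, every copy $X_i^{(k)}$ carries its own independent exogenous noise $U_i^{(k)}$ and its own mechanism $F_i^{(k)}$; nothing in the graph forces $X_i^{(k)}$ to be deterministic in $(R_i,\Pa^{(k)})$ nor forces $F_i^{(k)}=F_i^{(k')}$. You list this as something to ``verify'', but it is not a consequence of the construction---it has to be \emph{enforced} by extra constraints added to $\varphi'$. Without them the backward direction fails already on tiny inputs: $\varphi=\big(\PP{[X=0]Y=0\wedge[X=1]Y=1}=1\big)\wedge\big(\PP{[X=0]Y=1}=1\big)$ is counterfactually unsatisfiable, yet its rewriting (with separate copies for the three $[\cdot]$-occurrences) is satisfied on $\cG'$ by letting the third $Y$-copy use a different mechanism than the first two. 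There is also a smaller bug in your indexing: with one copy per intervention \emph{pattern}, the term $\PP{[X=0]Y=0\wedge[X=1]Y=1}$ would force you to intervene on a single copy of $X$ with two conflicting values. Both issues are repairable---use one copy per syntactic $[\cdot]$-occurrence and add constraints such as $\sum_{r,\bv}\PP{[R_i=r,\bv^{(k)},\bv^{(k')}]\,X_i^{(k)}\neqinPP X_i^{(k')}}=0$, which (by independence of $U_i^{(k)}$ and $U_i^{(k')}$) forces both determinism and equal mechanisms---but this is precisely the crux of the reduction and is missing from your plan. You would also need to argue that polynomially many $R$-nodes suffice to encode the (possibly dependent, possibly large-domain) exogenous variables of the source model.
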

An important tool for the proof of this theorem is the satisfiability of a Sch\"{o}nfinkel-Bernays sentence.
The class of Sch\"{o}nfinkel--Bernays  sentences (also called Effectively Propositional Logic, EPR) is a fragment of first-order logic formulas where satisfiability is decidable. Each  sentence in the class is of the form $\exists \bx  \forall \by \psi$ whereby $\psi$ can contain logical operations $\wedge, \vee, \neg$, variables $\bx$ and $\by$, equalities, and relations $R_i(\bx,\by)$ which depend on a set of variables, but $\psi$ cannot contain any quantifier or functions. Determining whether a Sch\"{o}nfinkel-Bernays sentence is satisfiable is an $\NEXP$-complete problem \cite{schoenfinkelLogicNEXPLewis1980} even if all variables are restricted to binary values \cite{schoenfinkelLogicBinaryNEXP2015}.

\begin{corollary}	
	$\SATinterventcompsumgraph$ is computationally harder than  $\SATinterventlinsum$ unless 
	$\NEXP = \PSPACE$.	
\end{corollary}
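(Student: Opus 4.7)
The plan is to derive this separation directly from the completeness results that have already been established, via a standard containment argument. The key ingredients are Theorem~\ref{thm:SATinterventcompsumgraph:ccNEXP-complete} (which gives $\NEXP$-completeness of $\SATinterventcompsumgraph$ under $\leqp$), and the previously cited fact from Table~\ref{fig:unconstrained} that $\SATinterventlinsum$ is $\PSPACE$-complete. The phrase ``computationally harder'' is read in the standard many-one reduction sense: $\SATinterventlinsum \leqp \SATinterventcompsumgraph$ (which is immediate, since the interventional linear language with summation embeds into the base-with-summation language on unconstrained graphs, and we can attach the complete DAG $\cG_n$ as in Proposition~\ref{fact:sat:sat_with_graph}), while the reverse reduction is what we must rule out.

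The core step is a contrapositive argument. Suppose, toward contradiction, that $\SATinterventcompsumgraph \leqp \SATinterventlinsum$. Because $\SATinterventlinsum \in \PSPACE$ and $\PSPACE$ is closed under polynomial-time many-one reductions, this would place $\SATinterventcompsumgraph$ in $\PSPACE$. But by Theorem~\ref{thm:SATinterventcompsumgraph:ccNEXP-complete} the problem is $\NEXP$-hard under $\leqp$, so every language in $\NEXP$ would reduce in polynomial time to a $\PSPACE$ language, giving $\NEXP \subseteq \PSPACE$. Combined with the trivial inclusion $\PSPACE \subseteq \NEXP$, this yields $\NEXP = \PSPACE$, which is the forbidden collapse.

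There is really no obstacle here beyond being careful about which completeness notion is being invoked; the corollary is a direct syntactic consequence of placing the two problems in different layers of the classical hierarchy. The only subtle point worth flagging explicitly in the write-up is that the reduction in one direction ($\SATinterventlinsum \leqp \SATinterventcompsumgraph$) already holds unconditionally, so ``harder'' is genuinely in the strict sense of not being polynomial-time equivalent, rather than merely ``at least as hard.'' If one wanted a slightly stronger conclusion, the same argument with $\SATinterventlinsum$ replaced by any $\PSPACE$ problem would apply, but the statement as given is exactly what the two completeness results deliver.
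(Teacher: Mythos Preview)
Your core contrapositive argument is correct and is exactly the (implicit) reasoning behind the paper's corollary: $\NEXP$-hardness of $\SATinterventcompsumgraph$ from Theorem~\ref{thm:SATinterventcompsumgraph:ccNEXP-complete} together with $\PSPACE$-membership of $\SATinterventlinsum$ forces $\NEXP \subseteq \PSPACE$ if the former reduces to the latter.

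One small slip worth fixing: your parenthetical justification for the unconditional direction $\SATinterventlinsum \leqp \SATinterventcompsumgraph$ says the linear language ``embeds into the base-with-summation language,'' but this is backwards---the base language lacks addition and is syntactically weaker than the linear one. The reduction still holds, but it follows from $\PSPACE \subseteq \NEXP$ together with the two completeness results, not from a direct syntactic embedding; and in any case this direction is not required for the corollary as stated. (Also note that Proposition~\ref{fact:sat:sat_with_graph} is formulated only for $\cL_1$, though the same complete-DAG argument extends to SCMs at higher layers.)
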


If the graph $\mathcal{G}$ is complete, that is, all nodes are connected by an edge, 
it is equivalent to a \emph{causal ordering}, an ordering $X_{i_1} \prec X_{i_2} \prec \ldots \prec X_{i_n}$, such that variable $X_{i_j}$ can only depend on variables $X_{i_k}$ with $i_k < i_j$.
If a causal ordering is given as a constraint, it does not change the complexity of the problem:

\begin{lemma}\label{lemm:causal:ordering}
    In $\cL_2$ one can encode a causal ordering.
\end{lemma}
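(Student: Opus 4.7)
The plan is to construct an $\cL_2$ formula $\varphi_\pi$ such that, for every $\cL_2$ formula $\psi$, the conjunction $\psi\wedge\varphi_\pi$ is satisfiable if and only if $\psi$ is satisfiable in some SCM whose causal diagram is a sub-DAG of the complete DAG $\cG_\pi$ with topological ordering $\pi=(X_{i_1}\prec\ldots\prec X_{i_n})$. Since the causal ordering corresponds exactly to this complete DAG, producing such a $\varphi_\pi$ amounts to encoding the ordering constraint within the language.

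For the encoding, I would require, for each index $j$, each pair of disjoint subsets $\bS\subseteq\{X_{i_1},\ldots,X_{i_{j-1}}\}$ and $\bT\subseteq\{X_{i_{j+1}},\ldots,X_{i_n}\}$, each partial assignment $\bv$ to $\bS$ and $\bw$ to $\bT$, and each value $x\in\mathit{Val}$,
\[
\PP{[\bS=\bv,\bT=\bw]\,X_{i_j}=x}\ =\ \PP{[\bS=\bv]\,X_{i_j}=x}.
\]
Intuitively, once the earlier variables $\bS$ have been set by intervention, additionally intervening on any later variables $\bT$ must not change the distribution of $X_{i_j}$. Because this is a conjunction of exponentially many clauses, I would collect it into one polynomial equation via the compact summation operator (the sum over all tuples of the squared differences equals $0$), yielding $\varphi_\pi$ of polynomial size in $\Linterventpolysum$.

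Soundness---that any SCM whose diagram is a sub-DAG of $\cG_\pi$ satisfies $\varphi_\pi$---is the easy direction: for such $\fM$, no variable in $\bT$ is an ancestor of $X_{i_j}$, so intervening on $\bT$ cannot alter the distribution of $X_{i_j}$ in the model already intervened on $\bS$; this follows from the standard truncated factorization under $\cG_\pi$.

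For completeness, given $\fM\models\psi\wedge\varphi_\pi$ I would build a canonical Markovian SCM $\fM'$ with causal diagram $\cG_\pi$: let $U'_{i_j}$ range over response functions $f\colon\mathit{Val}^{j-1}\to\mathit{Val}$, set $F'_{i_j}(x_{i_1},\ldots,x_{i_{j-1}},f):=f(x_{i_1},\ldots,x_{i_{j-1}})$, and choose $P(U'_{i_j})$ independently across $j$ so that the induced interventional marginals $\PPmodel{\fM'}{[X_{i_1}=v_1,\ldots,X_{i_{j-1}}=v_{j-1}]\,X_{i_j}=x}$ agree with those of $\fM$. The main obstacle will be proving that $\fM'$ reproduces \emph{all} interventional distributions of $\fM$, not just those pinned down by construction. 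The plan is to show, by induction on $j$, that the identities in $\varphi_\pi$ force every interventional distribution of $\fM$ to factor according to the truncated factorization with respect to $\cG_\pi$---any intervention on a later variable may be stripped away, and any intervention on an earlier variable not in $\bS$ may be absorbed into the conditioning value determined by the preceding layer of the induction---whose factors are exactly the marginals that $\fM'$ realizes. Hence $\fM$ and $\fM'$ agree on all $\cL_2$ expressions and $\fM'\models\psi$.
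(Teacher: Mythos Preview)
Your route differs sharply from the paper's, and the paper's trick is both simpler and strictly stronger. The paper introduces one fresh endogenous variable $C$, prepends the intervention $[C=0]$ to every primitive of the original formula (so the original semantics is untouched in ``normal mode''), and then, for each $j>1$ and each $k\in\mathit{Val}$, adds the single basic equality
\[
\PP{[C=1,\;V_{i_{j-1}}=k]\,V_{i_j}=k}\;=\;1.
\]
Because an intervened value of $V_{i_{j-1}}$ can reach $V_{i_j}$ only along directed paths in the causal diagram, these $O(n\cdot c)$ clauses force $V_{i_{j-1}}$ to be an ancestor of $V_{i_j}$ in any satisfying SCM, i.e.\ they force the ordering structurally. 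No summation, no multiplication: the encoding already lives in $\Linterventcomp$. This matters, since the immediate corollary in the paper applies the lemma to $\SATinterventlinsumgraph$, where your squared-difference encoding is unavailable.

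Your proposal has two concrete problems. First, the compact summation operator iterates over \emph{values} of a fixed tuple of dummy variables, not over \emph{subsets} of the intervened-upon variables; an expression $\PP{[\bS=\bv,\bT=\bw]\,X_{i_j}=x}$ in which the set $\bS\cup\bT$ itself varies with the summation index is not in the syntax of $\Epint$, so the ``sum over all tuples'' you describe cannot be written as stated. Second, the Markovian $\fM'$ you construct does not in general reproduce $\fM$. With $n=2$, $\pi=(X_1\prec X_2)$, and $\fM$ given by a single shared exogenous $U$ with $X_1=X_2=U$ uniform, your invariance constraints hold ($\PP{[X_2=w]X_1=x}=1/2=\PP{X_1=x}$, and for $j=2$ the successor set is empty), yet your response-function model $\fM'$ must set $P(U'_2)$ so that $\PPmodel{\fM'}{[X_1=v]X_2=x}=\PPmodel{\fM}{[X_1=v]X_2=x}=1/2$; combined with independence of $U'_1,U'_2$ this forces $\PPmodel{\fM'}{X_1=0,X_2=0}=1/4$, whereas $\PPmodel{\fM}{X_1=0,X_2=0}=1/2$. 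So your completeness step fails as written. A semi-Markovian $\fM'$ might rescue the claim, but then the factorisation argument you sketch (which relies on independence of the $U'_{i_j}$) no longer applies, and you would need an entirely different proof that your marginal-level invariances pin down all interventional \emph{joints}.
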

\begin{proof}
Given (in-)equalities in $\cL_2$, we add a new variable $C$ and add to each primitive the intervention $[C=0]$. 
This does not change the satisfiability of (in-)equalities.

Given a causal order $V_{i_1} \prec V_{i_2} \prec \ldots$, 
we add $c$ equations for each variable $V_{i_j}$, $j > 1$:

$\PP{[C=1, V_{i_{j-1}}=k] V_{i_{j}} = k} = 1$ for $k=1,\ldots,c$.

The equations ensure that, if one variable is changed, and $C=1$ is set, the next variable in the causal ordering has the same value, thus fixing an order from the first to the last variable.
\end{proof}

The equivalence between causal orderings and complete graphs thus results in:
\begin{corollary}
$\SATinterventlinsumgraph$ remains  $\PSPACE$-complete in the special case that  $\mathcal{G}$ happens to be a complete graph. 
\end{corollary}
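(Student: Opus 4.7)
The plan is to establish both $\PSPACE$ membership and $\PSPACE$-hardness of $\SATinterventlinsumgraph$ restricted to complete-DAG inputs, pivoting on the equivalence (complete DAG $\leftrightarrow$ causal ordering) just noted together with Lemma~\ref{lemm:causal:ordering}.

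For membership, the plan is to reduce to $\SATinterventlinsum$, which lies in $\PSPACE$ (see Table~\ref{fig:unconstrained}). Given an input $(\varphi,\mathcal{G})$, extract the unique topological order $\pi$ from the complete DAG $\mathcal{G}$. Since $\mathcal{G}$ is complete, an SCM has causal diagram contained in $\mathcal{G}$ if and only if its topological order is $\pi$, because every edge consistent with $\pi$ is already present in $\mathcal{G}$ and hence no further edge restriction is imposed. Applying Lemma~\ref{lemm:causal:ordering} produces a conjunction $\psi_\pi$ of polynomially many basic $\cL_2$-equalities (using one additional control variable) that forces the topological order of any satisfying SCM to be $\pi$. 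Then $(\varphi,\mathcal{G})$ is a yes-instance of the restricted problem if and only if $\varphi \wedge \psi_\pi$ is a yes-instance of $\SATinterventlinsum$, placing the restricted problem in $\PSPACE$.

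For $\PSPACE$-hardness, the plan is to reduce from $\SATinterventlinsum$, which is $\PSPACE$-hard by \cite{doerflerICLR2025}. Given $\varphi$, the reduction outputs $(\varphi, K_n)$ where $K_n$ is a canonical complete DAG on the variables of $\varphi$. Correctness relies on the observation that any SCM $\fM$ witnessing satisfiability of $\varphi$ can be padded to have a complete causal diagram consistent with its own topological order, by extending each structural function $F_i$ to take every predecessor as an argument with (ignored) dummy dependencies; this operation does not alter any interventional distribution and therefore preserves the truth value of $\varphi$. The main obstacle is that the topological order induced by $\fM$ is not known a priori while the reduction must commit to $K_n$ in advance; this will be handled by inspecting the $\PSPACE$-hardness reduction of \cite{doerflerICLR2025} (or by preprocessing $\varphi$ with the ordering-encoding trick of Lemma~\ref{lemm:causal:ordering}) to fix a canonical ordering matching $K_n$. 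Combining membership with hardness yields the claimed $\PSPACE$-completeness.
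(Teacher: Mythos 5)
Your proposal follows the same route as the paper: membership by encoding the causal ordering via Lemma~\ref{lemm:causal:ordering} and reducing to $\SATinterventlinsum \in \PSPACE$, and hardness by noting that the $\PSPACE$-hardness construction for $\SATinterventlinsum$ already fixes a canonical ordering that can be shipped along as a complete DAG. Two small remarks. First, a cosmetic point on membership: what Lemma~\ref{lemm:causal:ordering} actually outputs is $\varphi' \wedge \psi_\pi$ where $\varphi'$ is $\varphi$ with $[C=0]$ prepended to every primitive, not $\varphi \wedge \psi_\pi$; you mention the control variable, so this is only a notational slip.

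Second, and more substantively: your parenthetical alternative for hardness — ``preprocessing $\varphi$ with the ordering-encoding trick of Lemma~\ref{lemm:causal:ordering} to fix a canonical ordering matching $K_n$'' — does not give a correct many-one reduction. Committing to a canonical order $\pi$ in advance can change satisfiability, because at the interventional level the direction of causal edges is observable. Concretely, the $\cL_2$ formula $\PP{[X_2=0]X_1=0} > \PP{X_1=0}$ is satisfiable (take $X_2$ uniform and $X_1 := X_2$, so the intervention raises the left side to $1$), but it is \emph{un}satisfiable in any SCM whose causal diagram is contained in the complete DAG with order $X_1 \prec X_2$, since then $X_1$ cannot depend on $X_2$ and intervening on $X_2$ leaves the law of $X_1$ unchanged. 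Adding $\psi_\pi$ (which forces order $\pi$) to such a $\varphi$ only makes the combined formula unsatisfiable; it does not ``reroute'' satisfiability into the canonical order. So the only viable branch of your hardness plan is the first one — verifying that the $\PSPACE$-hardness construction in \cite{doerflerICLR2025} produces formulas whose witnessing SCMs respect a fixed, polynomial-time-computable order, and outputting the corresponding complete DAG — which is indeed what the paper's ``thus results in'' implicitly relies on. With the second alternative removed, the argument is sound and matches the paper.
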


Counterfactual languages on the other hand are strong enough to be able to enforce an explicit graph structure directly in the formula using exponential sums. 
That is, for each variable $X_i$, the formula can encode which variables are the parents of $X_i$ by requiring that $X_i$ remains constant if the parents remain constant. This condition is encoded with an intervention on the parents and an intervention on the other variables, and a counterfactual query ensuring that the second intervention does not change the probability distribution of $X_i$ after the first intervention. An exponential sum can run this through all possible interventions.
\begin{theorem}
    \label{thm:causal_with_graph:reduction}
     $\SATcausalstargraph \equiv_P \SATcausalstar$
    for any $*\in \{\text{base}\langle{\Sigma}\rangle, \text{lin}\langle{\Sigma}\rangle, \text{poly}\langle{\Sigma}\rangle\}$.
\end{theorem}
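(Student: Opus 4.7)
The plan is to establish both $\SATcausalstar \leqp \SATcausalstargraph$ and $\SATcausalstargraph \leqp \SATcausalstar$. The easy direction $\SATcausalstar \leqp \SATcausalstargraph$ mirrors Proposition~\ref{fact:sat:sat_with_graph}: map $\varphi$ on variables $X_1,\ldots,X_n$ to the pair $(\varphi, \cG_n)$, where $\cG_n$ is the complete DAG with edges $X_i \to X_j$ for $i<j$. Any SCM satisfying $\varphi$ can be extended to one with causal diagram exactly $\cG_n$ by augmenting each $F_i$ with dummy arguments for its ancestors in $\cG_n$, which changes neither the counterfactual distribution nor the independence of the exogenous variables.

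For the main direction $\SATcausalstargraph \leqp \SATcausalstar$, the plan is to map $(\varphi,\cG)$ to $\varphi \wedge \psi_\cG$, where $\psi_\cG$ expresses ``the causal diagram of the underlying SCM is contained in $\cG$''. For each $X_i$, write $\Pa_i$ for its parents in $\cG$ and $\mathbf{N}_i := \bX \setminus (\Pa_i \cup \{X_i\})$ for the remaining variables, and introduce the counterfactual event
\[
  \gamma_i(\pa_i, \bn_i) \;:=\; \bigvee_{v \neq v'} \bigl([\pa_i, \bn_i] X_i = v \;\wedge\; [\pa_i] X_i = v'\bigr) \in \Ecounter,
\]
which says that intervening additionally on $\mathbf{N}_i$ flips the counterfactual value of $X_i$. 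Set
\[
  \psi_\cG \;:=\; \bigwedge_{i=1}^{n} \Bigl( \sum_{\pa_i} \sum_{\bn_i} \PP{\gamma_i(\pa_i, \bn_i)} \;\le\; \PP{X_i = 0 \wedge X_i = 1} \Bigr).
\]
The right-hand side evaluates to zero, so non-negativity of probabilities forces every summand $\PP{\gamma_i(\pa_i,\bn_i)}$ to vanish. The formula is built only from basic counterfactual primitives, compact summations, and (in)equalities, hence already lies in $\Lcausalcompsum$ (and a fortiori in $\Lcausallinsum$ and $\Lcausalpolysum$); its size is polynomial in $|\cG|+|\varphi|$, so the reduction is polynomial for all three arithmetic fragments.

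Correctness has two parts. If $\fM$ has causal diagram contained in $\cG$, then $F_i$ depends syntactically only on $\Pa_i$ and $U_i$, so $[\pa_i,\bn_i] X_i(\bu) = [\pa_i] X_i(\bu)$ pointwise in $\bu$ and $\psi_\cG$ is trivially satisfied; any model of $\varphi$ with structure $\cG$ is thus a model of $\varphi \wedge \psi_\cG$. Conversely, if $\fM \models \varphi \wedge \psi_\cG$, then for every $i$ and almost every $\bu$ the counterfactual value of $X_i$ is unaffected by intervening on $\mathbf{N}_i$; replacing each $F_i$ by $F_i'(\pa_i,\bu)$ (e.g.\ by fixing a default assignment on the $\bn_i$-slot) yields an SCM whose causal diagram refines $\cG$ and still satisfies $\varphi$, because no counterfactual probability changes.

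The main obstacle I anticipate is reconciling the Markovianity convention in the reverse step: $\SATcausalstargraph$ ranges over Markovian SCMs, while a witness of $\varphi \wedge \psi_\cG \in \SATcausalstar$ may a priori be semi-Markovian, and the trimming of $F_i$ above leaves the joint exogenous distribution intact so that confounding may persist. The remedy is the standard semi-Markovian--to--Markovian transformation indicated in the footnote of Section~\ref{sec:preliminaries}: shared exogenous arguments are replaced by independent copies together with latent endogenous nodes that absorb the hidden correlations. One has to verify that these latents can be attached as parentless nodes above the appropriate observed variables so that the restriction of the resulting causal diagram to $\bX$ is still $\cG$, and that all counterfactual probabilities over measurable variables are preserved.
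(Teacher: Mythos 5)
Your proof is correct and follows essentially the same approach as the paper: both directions use the complete DAG for $\SATcausalstar \leqp \SATcausalstargraph$, and encode the graph structure for the converse as a counterfactual constraint $\psi_\cG$ asserting, for each $X_i$, that intervening on its $\cG$-parents determines $X_i$'s counterfactual value regardless of interventions on the remaining variables (your $\gamma_i$ and compact sum match the paper's $\sum_{\bv} \PP{[t_1,\ldots,t_k]X_i \neqinPP [\bv \setminus x_i] X_i}=0$, with $\PP{X_i=0 \wedge X_i=1}$ just serving as the usual encoding of the constant $0$). Your extra discussion of the Markovian/semi-Markovian mismatch and its resolution via latent endogenous nodes is a welcome level of care that the paper's proof sketch leaves implicit behind its footnote convention.
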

\onlyFull{
\begin{proof}
    We encode the structure of the graph $\cG$ directly into the formula.

    Let $X_i$ be a variable and let $T_1, \ldots, T_k$ be all the variables that are direct predecessors of $X_i$ in $\cG$.
    We then add the constraint
    \begin{align*}
        \sum_\bv \PP{[t_1, \ldots, t_k]X_i \neqinPP [\bv \setminus x_i] X_i} = 0
    \end{align*}
    to ensure that $X_i$ only depends on $T_1, \ldots, T_k$.
    The sum iterates over all possible values $\bv$ of the endogenous variables.
    Note that this constraint uses a bit of notational sugar, explained in detail in the proof of Theorem~\ref{thm:SATinterventcompsumgraph:ccNEXP-complete}.
\end{proof}}

Since we now know $\SATprobpolysumgraph \leqp \SATinterventpolysumgraph \leqp \SATcausalpolysumgraph \equiv_P \SATcausalpolysum$, Proposition~\ref{prop:SATprobstargraph:1:new1} together with $\SATcausalpolysum$ being $\succETR$ complete \cite{doerflerICLR2025} we obtain:
\begin{corollary}
    $\SATinterventpolysumgraph$ is $\succETR$ complete.
\end{corollary}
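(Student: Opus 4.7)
The plan is to sandwich $\SATinterventpolysumgraph$ between two problems already established to be $\succR$-complete, namely $\SATprobpolysumgraph$ on the lower side and $\SATcausalpolysumgraph$ on the upper side, and then invoke transitivity of polynomial-time reductions to conclude completeness.

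For the hardness direction, I would observe that every formula $\varphi \in \Lprobpolysum$ is syntactically also a formula in $\Linterventpolysum$, by interpreting each probabilistic primitive $\PP{\delta}$ as the interventional primitive $\PP{[\top]\delta}$ (i.e., with an empty intervention). This textual embedding carries the graph constraint $\cG$ along unchanged, so it gives the reduction $\SATprobpolysumgraph \leqp \SATinterventpolysumgraph$. Applying Proposition~\ref{prop:SATprobstargraph:1:new1}, which gives $\succR$-completeness of $\SATprobpolysumgraph$, we inherit $\succR$-hardness for $\SATinterventpolysumgraph$.

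For the membership direction, the same kind of embedding shows $\SATinterventpolysumgraph \leqp \SATcausalpolysumgraph$, since an interventional formula is a restricted counterfactual formula (one in which every primitive uses a single intervention and no nesting). Then Theorem~\ref{thm:causal_with_graph:reduction} yields $\SATcausalpolysumgraph \equiv_P \SATcausalpolysum$, and $\SATcausalpolysum$ is known to lie in $\succR$ (in fact, $\succR$-complete, cf.~\cite{doerflerICLR2025}). Concatenating the reductions places $\SATinterventpolysumgraph$ in $\succR$.

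There is essentially no genuine obstacle here: the whole statement is obtained by chaining together existing reductions. The only mildly delicate point is to notice that each embedding of one PCH level into the next preserves the polynomial-size summation structure and the DAG constraint, so that both reductions remain polynomial-time. Combining the two directions finishes the proof.
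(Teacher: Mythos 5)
Your proof is correct and takes essentially the same approach as the paper: it sandwiches $\SATinterventpolysumgraph$ between $\SATprobpolysumgraph$ and $\SATcausalpolysumgraph$ via the trivial syntactic embeddings of PCH levels, then appeals to Proposition~\ref{prop:SATprobstargraph:1:new1} for hardness and to Theorem~\ref{thm:causal_with_graph:reduction} together with the known $\succR$-completeness of $\SATcausalpolysum$ for membership. This is exactly the chain of reductions the paper invokes.
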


\section{Probabilistic and causal reasoning with a small model}\label{sec:etr:sum}
\label{sec:small}

In~\citep{blaser2024existential}, the complexity of a problem named $\SATprobpolysumsm$ is investigated. 
The authors consider purely probabilistic models where $sm$ means $$\#\{(x_1,\ldots,x_m): \mathbb{P}(X_1=x_1,\ldots, X_n=x_m) > 0\}$$ is small, i.e., is polynomially bounded in the input size. 
Before talking further about $\SATprobpolysumsm$, we need to show that their problem $\SATprobpolysumsm$ is equivalent to our problem $\SATprobpolysumsm$ according to Definition~\ref{lab:def:small:model:new:U} where $$\#\{(u_1,\ldots,u_m): P(U_1=u_1,\ldots, U_m=u_m) > 0\}$$ is small (we cannot use their definition, as it is not meaningful to give constraints on the distribution $\mathbb{P}(\bX)$ for \emph{causal} models where interventions can change this distribution in various ways).  

\newcommand{\countPosU}[1]{\#_\bU^+(#1)}
\newcommand{\countPosX}[1]{\#_\bX^+(#1)}

\begin{fact}\label{fact:small:model:observed:below:unobserved}
For a given model $\fM$, let $\countPosX{\fM} = \#\{(x_1,\ldots,x_n): \PP{X_1=x_1,\ldots, X_n=x_n} > 0\}$ be the number of assignments to the observed variables with positive probability and $\countPosU{\fM}= \#\{(u_1,\ldots,u_m): P(U_1=u_1,\ldots, U_m=u_m) > 0\}$ be the number of assignments to the unobserved variables with positive probability.

Then 
$\countPosX{\fM} \leq \countPosU{\fM}$.
\end{fact}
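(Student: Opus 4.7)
The plan is to exploit the deterministic dependence of endogenous variables on exogenous variables in an SCM. Since $\fM = (\cF, P, \bU, \bX)$ is recursive, the functions in $\cF$ determine, for every assignment $\bu$ to $\bU$, a unique induced assignment $\bx(\bu) \in \mathit{Val}^n$ to the endogenous variables $\bX$. In particular, the joint distribution over $\bX$ can be written as
\[
\PP{X_1 = x_1, \ldots, X_n = x_n} \;=\; \sum_{\bu \,:\, \bx(\bu) = (x_1,\ldots,x_n)} P(\bu).
\]

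The key observation is that if $\PP{X_1 = x_1, \ldots, X_n = x_n} > 0$, then at least one term in this sum is strictly positive, so there exists some $\bu$ with $P(\bu) > 0$ and $\bx(\bu) = (x_1,\ldots,x_n)$. This lets me define a function $\Phi$ from the set $S_\bX = \{\bx : \PP{\bX = \bx} > 0\}$ to the set $S_\bU = \{\bu : P(\bu) > 0\}$ by choosing, for each $\bx \in S_\bX$, an arbitrary $\bu \in S_\bU$ with $\bx(\bu) = \bx$ (using, e.g., the axiom of choice over a finite index set, which is of course trivial).

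The map $\Phi$ is injective: if $\Phi(\bx) = \Phi(\bx') = \bu$, then $\bx = \bx(\bu) = \bx'$, since $\bx(\cdot)$ is a well-defined function of $\bu$. Injectivity of $\Phi \colon S_\bX \to S_\bU$ immediately gives $|S_\bX| \leq |S_\bU|$, which is exactly $\countPosX{\fM} \leq \countPosU{\fM}$.

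There is no real obstacle here — the statement is essentially a restatement of the fact that $\bX$ is a deterministic function of $\bU$ in an SCM, so the support of the induced distribution on $\bX$ cannot be larger than the support of the distribution on $\bU$. The only subtlety worth flagging in the write-up is the recursion assumption on $\cF$ (that the endogenous variables depend acyclically on $\bU$ and on each other), which guarantees that $\bx(\bu)$ is well-defined in the first place; this was already built into the definition of an SCM earlier in the paper.
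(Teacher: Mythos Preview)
Your proof is correct and follows essentially the same idea as the paper's: both use that the functions $\cF$ determine $\bx$ deterministically from $\bu$, so the support of $\bX$ is the image of the support of $\bU$ under this map and hence no larger. The paper phrases this in one line via the forward map, whereas you spell out the equivalent injection $\Phi\colon S_\bX \to S_\bU$; the content is identical.
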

\begin{proof}
The functions $\cF$ map an unobserved assignment $\bu$ to an observed assignment $\bx$ deterministically. 
Thus any assignment $\bu$ with positive probability $P(\bu)$ can only make at most a single value $\PP{\bx}$ non-zero.
\end{proof}
The reverse does not hold. All functions in the model might be constant, in which case $\#_\bX^+ = 1$, regardless of $\#_\bU^+$. 
Thus a model that is a solution for the $\SATprobpolysumsm$ problem of \citep{blaser2024existential} might not be a solution to our problem $\SATprobpolysumsm$.
Still:

\begin{fact}\label{fact:small:model:sat:observed:unobserved}
For a given probabilistic formula $\varphi$ 
and a number $p \in \IN$,
there exists a semi-Markovian model  $\fM=(\cF, P, \bU=\{U_1,\ldots,U_m\}, \bX)$ 
	such that $\fM \models\varphi$ and
    $\countPosU{\fM} \le p$
    if and only if 
there exists a model  $\fM'=(\cF', P', \bU'
, \bX)$ 
	such that $\fM' \models\varphi$ and
    $\countPosX{\fM'} \le p$.
\end{fact}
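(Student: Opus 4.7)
The forward direction is immediate from Fact~\ref{fact:small:model:observed:below:unobserved}: if $\fM$ is semi-Markovian with $\countPosU{\fM}\le p$, simply take $\fM'=\fM$, for which $\countPosX{\fM'}\le\countPosU{\fM}\le p$ and trivially $\fM'\models\varphi$.

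For the reverse direction, I would construct an SCM from the observational distribution induced by $\fM'$. Let $\bx^{(1)},\ldots,\bx^{(k)}\in\mathit{Val}^n$ be the distinct assignments to $\bX$ with $\PPmodel{\fM'}{X_1=x_1^{(j)},\ldots,X_n=x_n^{(j)}}>0$, so $k=\countPosX{\fM'}\le p$. Define a semi-Markovian model $\fM=(\cF,P,\bU,\bX)$ by taking a \emph{single} exogenous variable $\bU=\{U\}$ with domain $\{1,\ldots,k\}$, setting
\[
P(U=j)\;:=\;\PPmodel{\fM'}{X_1=x_1^{(j)},\ldots,X_n=x_n^{(j)}}\quad\text{for }j=1,\ldots,k,
\]
and defining deterministic functions $F_i(U=j):=x_i^{(j)}$ for each $i=1,\ldots,n$. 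The dependency graph is trivially acyclic (every $F_i$ depends only on the exogenous $U$), and $\countPosU{\fM}=k\le p$ by construction.

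The key verification is that $\fM$ induces the \emph{same} joint observational distribution on $\bX$ as $\fM'$: by the definition of $F_i$, a positive probability mass $P(U=j)$ is placed entirely on the outcome $\bx^{(j)}$, so $\PPmodel{\fM}{\bX=\bx^{(j)}}=P(U=j)=\PPmodel{\fM'}{\bX=\bx^{(j)}}$, and all other outcomes have probability zero under both models. Because $\varphi\in\Lprobpolysum$ is a purely probabilistic (Layer 1) formula, its semantics depend solely on the joint distribution of $\bX$; hence $\fM\models\varphi$ iff $\fM'\models\varphi$.

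The only subtlety I expect is making sure the construction is admissible under the paper's definition of SCM: $U$ ranges over a discrete finite domain (size $k\le p$), $\fM$ is semi-Markovian (a single shared exogenous is allowed), and counting assignments in $\{1,\ldots,k\}$ is consistent with Definition~\ref{lab:def:small:model:new:U} when $m=1$. Everything else reduces to a one-line equality check between the two observational distributions.
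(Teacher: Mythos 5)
Your proof is correct and takes essentially the same approach as the paper: the forward direction is identical (take $\fM'=\fM$ and invoke Fact~\ref{fact:small:model:observed:below:unobserved}), and the reverse direction constructs a semi-Markovian model in which $\bX$ is a deterministic function of $\bU$ and the observational distribution of $\bX$ is reproduced exactly, so $\countPosU{\fM}=\countPosX{\fM'}\le p$. The only cosmetic difference is the parametrization of $\bU$: the paper uses $n$ exogenous variables $U_1,\ldots,U_n$ with $F_i(u_i)=u_i$ and copies the observational law onto $\bU$ (so $\countPosU{\fM}$ equals the number of positive-probability outcomes), whereas you use a single exogenous index $U\in\{1,\ldots,k\}$ enumerating the positive-mass outcomes; both choices yield the same count and the same induced distribution, so the arguments are interchangeable.
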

\onlyFull{
\begin{proof}
"$\Rightarrow{}$": If there exists a model $\fM$, we can set $\fM'=\fM$ due to Fact~\ref{fact:small:model:observed:below:unobserved}.

"$\Leftarrow{}$": If there exists a model $\fM'$, we define the model $\fM$ as follows:

\begin{compactitem}
\item $\bU = (U_1,\ldots,U_n)$,
\item $F_i(u_i) = u_i$,
\item $P(U_1=u_1,\ldots,U_n = u_n) = \mathbb{P}'(X_1=u_1,\ldots, X_n=u_n)$.
\end{compactitem}
Then $\mathbb{P} = P = \mathbb{P}'$ due to identifying $U_i$ with $X_i$. Thus $\fM$ and $\fM'$ are indistinguishable for all probabilistic formulas.
\end{proof}
}

Thereby it is critical to consider semi-Markovian models, as the unobserved variables constructed in the proof might not be independent, as it was required in Markovian models. 

\begin{fact}\label{fact:small:model:sat:not:Markovian}
Fact~\ref{fact:small:model:sat:observed:unobserved} does not hold for Markovian models. 
\end{fact}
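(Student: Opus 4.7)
The plan is to exhibit an explicit counterexample: a probabilistic formula $\varphi$ and a bound $p$ for which a general (semi-Markovian) model exists with small observed support, yet every Markovian SCM satisfying $\varphi$ has exponentially large unobserved support. The distribution I would use is the \emph{staircase distribution} on $n$ binary variables $X_1,\ldots,X_n$: the uniform distribution on the $n+1$ points $s_j := (\underbrace{1,\ldots,1}_{j},\underbrace{0,\ldots,0}_{n-j})$ for $j=0,1,\ldots,n$.

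First I would encode the staircase with a polynomial-size formula $\varphi_n$ in $\Lprobcompsum$: take the conjunction of $\PP{\bigvee_{j=0}^n (X_1=s_j^1 \wedge \cdots \wedge X_n=s_j^n)} = 1$ together with the chain $\PP{s_0}=\PP{s_1}=\cdots=\PP{s_n}$. These constraints force every atom $s_j$ to have probability exactly $1/(n+1)$ and every other assignment probability zero. Setting $p := n+1$, the ``easy'' direction is witnessed by $\fM'$ with $\bU=\bX$, $F_i(u_i)=u_i$, and $P(\bU)$ equal to the staircase; this is a semi-Markovian (necessarily \emph{not} Markovian, since the $U_i$ are correlated) model satisfying $\fM' \models \varphi_n$ and $\countPosX{\fM'}=n+1=p$.

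The heart of the proof is showing that any Markovian SCM $\fM=(\cF,P,\bU,\bX)$ with $\fM \models \varphi_n$ has $\countPosU{\fM}\geq 2^n$. By independence of the $U_i$, one has $\countPosU{\fM}=\prod_{i=1}^n |\mathrm{supp}(U_i)|$, so it suffices to prove $|\mathrm{supp}(U_i)|\geq 2$ for each $i$. Let $\cG$ be the DAG of $\fM$ and $\Pa_i$ the parents of $X_i$ in $\cG$. Since $\PP{X_i=x_i \mid \pa_i}=\sum_{u_i : F_i(\pa_i,u_i)=x_i} P(U_i=u_i)$, the support of this conditional is at most $|\mathrm{supp}(U_i)|$, so it is enough to exhibit some $\pa_i^\star$ with $\PP{X_i=0\mid\pa_i^\star}>0$ and $\PP{X_i=1\mid\pa_i^\star}>0$. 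The key combinatorial observation is that $s_{i-1}$ and $s_i$ differ only in coordinate $i$; since acyclicity of $\cG$ forces $X_i\notin\Pa_i$, the two staircase points agree on every coordinate indexed by $\Pa_i$. Their common projection to $\Pa_i$ is the desired $\pa_i^\star$, giving $|\mathrm{supp}(U_i)|\geq 2$ and hence $\countPosU{\fM}\geq 2^n>p$ for any $n\geq 2$.

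The main obstacle is ensuring this lower bound holds \emph{uniformly} over all DAGs $\cG$ that a Markovian SCM might adopt—including those with arbitrarily rich parent sets that could in principle let one variable's randomness be ``pushed'' onto another. The argument above avoids any case analysis on $\cG$ by leveraging only two structural facts: the existence of a pair $s_{i-1},s_i$ of positive-probability support points differing exactly at coordinate $i$, and the fact that $X_i\notin\Pa_i$. These together force each $U_i$ to carry at least one bit of entropy, independently, yielding the exponential gap.
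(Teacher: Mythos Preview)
Your proof is correct and takes a genuinely different route from the paper's. The paper uses a minimal concrete example: $n=2$ binary variables, $p=3$, and an observed distribution with exactly three atoms. The punchline there is arithmetic: in a Markovian model over two endogenous variables the unobserved support size factors as $|{\rm supp}(U_1)|\cdot|{\rm supp}(U_2)|$, and since $3$ is prime one factor must equal $1$, forcing one $X_i$ to be deterministic given the other and hence $\countPosX{\fM'}\le 2<3$.

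Your staircase construction works for every $n$ and yields an \emph{exponential} gap ($\countPosU{\fM}\ge 2^n$ versus $\countPosX{\fM'}=n+1$), which is more than the bare statement requires. The mechanism is also different: rather than a divisibility obstruction you exhibit, for each coordinate $i$, two support points $s_{i-1},s_i$ differing only at $i$ and therefore agreeing on $\Pa_i$ regardless of the DAG, so each exogenous block $U_i$ must carry at least one bit. This argument is uniform in $\cG$ and scales with $n$, whereas the paper's primality trick is tied to the specific value $p=3$. Conversely, the paper's example is shorter and avoids any formula-size bookkeeping. One small remark: your witness for the ``small observed support'' side is the semi-Markovian identity model; this matches the natural reading of Fact~\ref{fact:small:model:sat:observed:unobserved} (only the $\bU$-support side is being restricted to Markovian), and in any case the staircase distribution is also realizable by a Markovian model with large $\bU$, so the observed-support side can be witnessed Markovianly too if one prefers.
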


Nevertheless, Fact~\ref{fact:small:model:sat:observed:unobserved} implies that the $\existsR^\Sigma$-completeness proof for a ``$\SATprobpolysumsm$'' version restricted on the observed distribution in \citep{blaser2024existential} also applies to our $\SATprobpolysumsm$. 
Thus we know that $\SATprobpolysumsm$ is complete for a new complexity 
class $\existsR^\Sigma$, a class that extends the existential theory of reals with summation operators. 
This is not a succinct class, so it is different from $\succR$, in fact they show
$$\existsR \cup \NP^{\ccPP}  \subseteq \existsR^\Sigma \subseteq \PSPACE \subseteq \NEXP \subseteq \succR = \realNEXP,$$
where $\realNEXP$ is the class of problems decided by nondeterministic real 
Random Access Machines (RAMs) in  
exponential time (see   \citep{blaser2024existential} for the exact details).

Not allowing multiplication reduces the complexity, however, once we allow for interventions, the complexity increases again. We know the following complexities for linear arithmetic from \citep{doerflerICLR2025}, whose results translate to small models:
 
\begin{lemma}
    \label{lemma:small:model:in:linear}
    The following completeness results hold for small models:
    \begin{compactitem}
        \item $\SATprobcompsumsm$ and $\SATproblinsumsm$ are $\NP^{\ccPP}$-complete.
        \item $\SATinterventcompsumsm$ and $\SATinterventlinsumsm$ are $\PSPACE$-complete.
        \item $\SATcausalcompsumsm$ and $\SATcausallinsumsm$ are $\NEXP$-complete.
    \end{compactitem}
\end{lemma}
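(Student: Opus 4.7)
The plan is to show that the completeness results for $\SATprobcompsum / \SATproblinsum$, $\SATinterventcompsum / \SATinterventlinsum$, and $\SATcausalcompsum / \SATcausallinsum$ from \cite{doerflerICLR2025} carry over verbatim when we impose the small-model constraint of Definition~\ref{lab:def:small:model:new:U}. Both directions must be verified: (i) that the respective upper-bound algorithms still apply when we additionally have to certify a polynomial support bound on $P$, and (ii) that the hardness reductions from \cite{doerflerICLR2025} already produce satisfiable instances whose witnessing model has polynomial $\bU$-support, so that the small-model restriction does not weaken them.

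For membership, I would first observe that restricting the class of admissible models can never make a satisfiability problem harder: if a formula $\varphi$ is small-model satisfiable, it is a fortiori satisfiable in the unconstrained sense. The task is therefore only to argue that the given upper-bound algorithms can also \emph{verify} that the guessed model has support at most $p$. In each of the three cases, the algorithm of \cite{doerflerICLR2025} guesses or enumerates a model (a Bayesian network in the probabilistic case, a polynomial-length strategy over interventions in the interventional case, an exponential-size SCM in the counterfactual case) and evaluates the formula using, respectively, $\ccPP$ oracle queries, alternation inside $\PSPACE$, or direct $\NEXP$-verification. In each case, counting the number of $\bu$ with $P(\bu) > 0$ and comparing with the unary bound $p$ is an additional check that fits inside the same class (counting non-zero entries is $\sharpP$, hence absorbed by $\NP^{\ccPP}$ in the probabilistic case, trivial in $\PSPACE$ and $\NEXP$ for the other layers).

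For hardness, the plan is to revisit each reduction from \cite{doerflerICLR2025} and set $p$ to be a suitable polynomial in the input size. The $\NP^{\ccPP}$-hardness of $\SATprobcompsum$ proceeds by reduction from {\sc D-MAP}, where the witness is a Bayesian network with conditional probability tables whose support is polynomially bounded (uniform / deterministic choices on $\bU$), so $p$ can be chosen accordingly. The $\PSPACE$-hardness of $\SATinterventcompsum$ reduces from a quantified-Boolean-style problem, and its witnessing SCM again has an unobserved distribution that is essentially uniform over a polynomial-size index set. The $\NEXP$-hardness of $\SATcausalcompsum$ reduces from a succinct Boolean satisfiability / Sch\"{o}nfinkel-Bernays-type problem; the witnessing SCM there typically uses exogenous variables whose joint distribution is supported on a polynomially bounded number of tuples (even though the number $m$ of $U_i$'s is large), since the non-trivial randomness only encodes a polynomial-size certificate. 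In all three cases I would carefully check the original constructions and set $p$ to match the support size of the intended satisfying model.

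The main obstacle will be the counterfactual/$\NEXP$ case. Here the underlying SCM has exponentially many exogenous variables, and one has to argue that the satisfying model built in the reduction of \cite{doerflerICLR2025} can be taken to have only polynomially many non-zero entries in $P(\bU)$, rather than merely polynomial \emph{representation size}. If the reduction as stated yields a distribution with super-polynomial support, I would modify it by canonicalizing the exogenous variables: replace the collection $\bU$ by a single ``index'' variable (together with deterministic copies) whose distribution is uniform over a polynomial set of representative configurations encoding the guessed certificate. Because the mechanisms $\cF$ are deterministic given $\bu$, this does not alter which counterfactual events are satisfied, but it forces $\countPosU{\fM}$ to be polynomial, completing the $\NEXP$-hardness of $\SATcausalcompsumsm$ and $\SATcausallinsumsm$.
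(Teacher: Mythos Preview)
Your plan is essentially correct and aligns with the paper's approach: both directions amount to inspecting the proofs in \cite{doerflerICLR2025} and observing that they already produce or rely on small models. The paper's actual proof is a two-sentence pointer: for the probabilistic and interventional layers, the membership proofs in \cite{doerflerICLR2025} explicitly invoke a small-model lemma from \cite{fagin1990logic} (finding rational solutions with polynomial support), so the restricted and unrestricted problems coincide; for the counterfactual hardness, the constructed witnessing model is purely deterministic.

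The one place where you work harder than necessary is the $\NEXP$-hardness of the counterfactual layer. You anticipate having to canonicalize the exogenous variables to force $P(\bU)$ onto polynomial support, and sketch a modification of the reduction. This is unnecessary: the reduction in \cite{doerflerICLR2025} encodes the entire Sch\"{o}nfinkel--Bernays witness (the existential choices $\bx$ and the relations $R_i$) inside the deterministic mechanisms $\cF$; the exogenous distribution plays no role whatsoever and can simply be taken to be a point mass on a single $\bu$. The model is already as small as possible, with $\countPosU{\fM}=1$, and no modification is needed. Your proposed fix would work, but it solves a problem that does not arise.
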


Allowing multiplications again leads to $\NEXP$-completeness, even on the second (interventional) level.
Note however that this is likely still weaker than the $\realNEXP$-completeness 
of $\SATinterventpolysum$ and $\SATcausalpolysum$.
\begin{theorem}\label{lab:lemma:interv:counter:smconj}
	$\SATinterventpolysumsm$ and $\SATcausalpolysumsm$ are $\NEXP$-complete.
\end{theorem}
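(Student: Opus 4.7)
I would treat $\SATinterventpolysumsm$ and $\SATcausalpolysumsm$ uniformly. Given an input $(\varphi, p)$, a nondeterministic exponential-time algorithm first guesses the $p$ support vectors $\bu^{(1)}, \ldots, \bu^{(p)}$ of $P$ together with a table specifying the structural functions $\cF$ at those support points (polynomial-size data). Syntactically expanding every $\Sigma$-operator in $\varphi$ yields an expression of at most exponential size in which each basic term $\PP{[\alpha]\beta}$ equals $\sum_{i \in S}P(\bu^{(i)})$, where the index set $S$ is computable from the guessed $\cF$ and the (now fixed) intervention $\alpha$ in exponential time. The resulting expression is an $\ETR$-style formula of exponential length over only $p = \poly(|\varphi|)$ real unknowns $P(\bu^{(i)})$. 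By Renegar's theorem such an instance is decidable in $\PSPACE$, which is contained in $\EXP$, so the whole procedure runs in $\NEXP$.

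\textbf{Hardness for $\SATcausalpolysumsm$.} This follows trivially from Lemma~\ref{lemma:small:model:in:linear}: $\SATcausalcompsumsm$ is already $\NEXP$-complete, and the polynomial-plus-summation language subsumes the basic-plus-summation one, giving $\SATcausalcompsumsm \leqp \SATcausalpolysumsm$.

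\textbf{Hardness for $\SATinterventpolysumsm$.} This is the main technical point, since the linear-plus-summation interventional problem reaches only $\PSPACE$ (Lemma~\ref{lemma:small:model:in:linear}), so polynomial arithmetic must carry the whole jump. I would reduce from a Sch\"onfinkel--Bernays sentence $\exists \bx \forall \by\, \psi(\bx, \by)$ in binary variables, following the template of Theorem~\ref{thm:SATinterventcompsumgraph:ccNEXP-complete}: the free choice of the SCM (equivalently, the free choice of support points of $P$) encodes $\exists \bx$; the universal quantifier $\forall \by$ is encoded by a single compact $\Sigma$-sum over all assignments $\by$, required to evaluate to zero; each atomic relation $R_i(\bx, \by)$ is expressed as a polynomial identity in basic interventional probabilities $\PP{[\alpha]\beta}$, which is exactly where polynomial arithmetic is needed and where the linear case fails. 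The small-model side condition is handled by padding $p$ to a polynomial so that the intended witness model, which has one support point per choice of $\bx$, meets the bound.

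\textbf{Main obstacle.} The upper bound is essentially routine once one combines the small-support guess with Renegar's $\PSPACE$ algorithm for $\ETR$ instances of exponential size but polynomially many variables. The hard part is the lower bound for $\SATinterventpolysumsm$: one must simulate the universal quantifier $\forall \by$, which feels inherently counterfactual, using only products, sums, and compact $\Sigma$'s of \emph{purely interventional} basic terms, \emph{and} simultaneously exhibit an intended model whose exogenous support is polynomially bounded. Making the polynomial encoding of the $R_i$'s go through while preserving the small-support property is the delicate step; everything else is bookkeeping around the existing $\NEXP$-hardness reductions available in the literature.
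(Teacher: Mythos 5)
Your membership argument follows the same route as the paper: guess the $p$ support points and the structural functions, expand the $\Sigma$'s, obtain an exponential-size $\ETR$ instance in only $p$ real unknowns, and apply Renegar. Two details are off. First, the table specifying $\cF$ at a single support point $\bu^{(i)}$ is \emph{not} ``polynomial-size data'': each $F_j$ restricted to a fixed $\bu^{(i)}$ is still a function of the parents' values, so the table can have exponentially many entries (the paper explicitly guesses exponentially many bits here, which $\NEXP$ permits). Second, Renegar does not give $\PSPACE$ on an exponential-size input; the correct accounting is that with $p$ variables, polynomially many polynomials of polynomial degree, and coefficients of polynomial bit length, Renegar runs in time $\poly(|\varphi|)^{O(p)} = 2^{\poly}$, which is what places the whole procedure in $\NEXP$. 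Neither slip affects the conclusion.

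Your hardness argument for $\SATcausalpolysumsm$ via $\SATcausalcompsumsm \leqp \SATcausalpolysumsm$ is correct.

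The genuine gap is in the hardness of $\SATinterventpolysumsm$, which you yourself flag as the delicate step. You correctly identify the reduction source (Sch\"onfinkel--Bernays) and the template (adapt Theorem~\ref{thm:SATinterventcompsumgraph:ccNEXP-complete}), but you are missing the idea that makes purely interventional terms sufficient: one first uses polynomial arithmetic to \emph{force the model to be deterministic}, by adding constraints of the shape $\sum_{\bt}\sum_{\bv}\bigl(\PP{[\bt]\bv}^2 - \PP{[\bt]\bv}\bigr)^2 = 0$ (together with Lemma~\ref{lemm:causal:ordering} to fix a causal ordering). Forcing every post-interventional probability to lie in $\{0,1\}$ collapses the model to a single deterministic SCM, after which the graph-structure / consistency constraints that were expressed via the DAG (or via counterfactuals) can be rewritten as squared differences of interventional probabilities. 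This is exactly where ``polynomial arithmetic carries the jump,'' but your sketch never says how, and saying ``each relation is a polynomial identity in $\PP{[\alpha]\beta}$'' does not by itself get around the impossibility of separating exogenous randomness from structural dependence in $\cL_2$ without the determinism-forcing step. Relatedly, your remark that the witness has ``one support point per choice of $\bx$'' and needs $p$ padded to a polynomial is a misconception: the intended witness model is a single deterministic SCM, so its exogenous support has size one and the small-model bound is trivially met with $p=1$. As written, ``one support point per choice of $\bx$'' would be exponentially many points and could not be salvaged by polynomial padding.
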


This follows because a causal model consists of a probabilistic part $\PP{\bu}$ and a deterministic, causal part $\cF$.
In the small model, $\PP{\bu}$ is restricted to have polynomial size, but $\cF$ can still have exponential size.
In $\SATprobpolysumsm$, one cannot reason about $\cF$, but $\SATinterventpolysumsm$ and $\SATcausalpolysumsm$ can access the full power of $\cF$. Since $\NEXP$ subsumes $\existsR$, reasoning about polynomially many real numbers (probabilities) cannot increase the complexity.

\section{Discussion}\label{sec:landscape}

We investigated the computational complexities of probabilistic satisfiability problems from a multi-parametric point of view. Our new completeness results nicely extend and complement the previous achievements by \cite{fagin1990logic, ibeling2022mosse, zander2023ijcai, blaser2024existential}. Our main focus was on the effect of constraining the model. 
We have shown that including a DAG in the input increases the complexity on the second level for linear languages (i.e., $\SATinterventlinsumgraph$) to \NEXP-complete, while it does not change the complexity on the third level or for the full languages. We leave the exact complexity of the first linear level (i.e., $\SATproblinsumgraph$) for future work, but we conjecture that it is  $\existsR^\Sigma$-complete.
On the other hand, including a Bayesian network, a DAG and its probability tables reduces the complexity to being in $\cP^{\ccPP}$. Here, the completeness is left for future work\footnote{As the main computation in the proof of Lemma~\ref{thm:SATproblinbBN:DAG} is performed by the \ccPP-oracle; one can extend the proof to show the containment of $\SATprobpolysumBN$ in the space-limited complexity class $\mathsf{L}^\ccPP$ and in the circuit complexity class $\mathsf{NC}_1^{\ccPP}$. Thus, it is unlikely to be $\cP^{\ccPP}$-complete, although it might be possible that all three classes are identical.}.

Another interesting feature is that when the model is unconstrained, then the completeness  for linear 
languages is expressed in terms of standard Boolean classes while 
the completeness of satisfiability for languages involving polynomials 
over the probabilities requires classes over the reals. However, once we constrain the model to be small, then the complexity with polynomial arithmetic is again described by a Boolean complexity class (for PCH layers two and three). This is due to the fact that  the running time of Renegar's algorithm is essentially determined by the number of variables (which is small in the case of a small model) and not by the number of equations (which still can be large).


\bibliography{lit-icalp} 

\onlyProc{\end{document}}

\newpage
\appendix


\section{Constrained models with expanded marginalization}

\label{app:expanded}

Moss{\'e} et al.\ \cite{ibeling2022mosse} investigate the complexity of satisfiability with different arithmetics (basic, linear, polynomial) and layers (probabilistic, interventional, counterfactual) of the PCH. They do not have a compact marginalization operator. In this setting, they prove that only the arithmetic matters. If the arithmetic is basic or linear, then all satisfiability problems are $\NP$-complete, independent of the PCH layer. For the probabilistic layer, this was also proven by Fagin et al. \cite{fagin1990logic}. If the arithmetic is polynomial, then the satisfiability problems are $\existsR$-complete. Moss{\'e} et al.\ in particular prove a small model property \cite[Lemma 4.7]{ibeling2022mosse}: In each case, if there is a model, then there is one with only polynomially many entries with non-zero probability in the joint distribution. That means that requiring that the model is small does not change the complexity of the problem. However, the proof of Lemma 4.7 by Moss{\'e} et al.\ even yields a stronger statement. They show that for every fixed graph structure of the SCM, if there is a model with this graph structure, then there is also a small model with the same graph structure. Therefore, constraining the graph does not change the complexity of the respective satisfiability problem.

\section{Technical Details and Proofs}\label{sec:appendix}

\subsection{Proofs of Section \ref{sec:given:DAG:structure}}

\begin{proof}[Proof of Lemma~\ref{thm:SATproblinbBN:DAG}]
To show that  $\SATprobpolysumBN$ is in $\cP^{\sharpP}$,
assume $\varphi$ is a probabilistic formula with polynomial terms
and  $\cB=(\cG,P_{\cB})$ a BN over variables $X_1,\ldots,X_n$.
We modify $\varphi$ to an equivalent formula in which all primitives are 
of the form $\PPB{X_1 = \xi_1,\dots,X_n = \xi_n}$, that is, 
every variable occurs in the primitive and the basic events are connected by conjunctions.   
This is achieved inductively, similar to the proof of \cite[Fact 2]{doerflerICLR2025}:
For a given $\PPB{\delta}$, 
we define the arithmetization $A_\delta(\xi_1,\dots,\xi_n)$ of $\delta$ to be $1$
if $\PPB{X_1 = \xi_1,\dots,X_N = \xi_n}$
contributes to $\PPB{\delta}$, and $0$ otherwise. 
$A_\delta$ can be defined inductively by following the
logical structure of $\delta$. We first present the construction
for the Boolean domain for simplicity.
    \begin{itemize}
        \item For the base case $\delta$ equals to $X_i = \xi$, the arithmetization is 
        $A_{\delta}(X_1,\dots,X_n)=1-(X_i - \xi)^2$. 
        \item If $\delta = \delta_1 \wedge \delta_2$, then 
        $A_\delta = A_{\delta_1} A_{\delta_2}$.
        \item If $\delta = \neg \delta_1$, then $A_\delta = 1 - A_{\delta_1}$.
    \end{itemize}
    Since we implement $A_\delta$
    as a polynomial, it is also easy to implement the basic 
    predicates ``$X_i = \xi$'' over larger domains $\mathit{Val}$.
    It is simply a polynomial of degree $|\mathit{Val}|$ that is $1$ on $\xi$
    and $0$ everywhere else. Such a polynomial can be 
    found using Lagrange interpolation.
    
    Then we have 
    \begin{equation}\label{eq:pspace:estimation:of:t}
        \PPB{\delta} = \sum_{\xi_1} \dots \sum_{\xi_n} A_\delta(\xi_1,\dots,\xi_n)
        \PPB{X_1 = \xi_1,\dots,X_n = \xi_n}.
    \end{equation}
where $\PPB{X_1 = \xi_1,\dots,X_n = \xi_n}=\prod_{i=1}^n 
\PPB{X_i=\xi_i \mid \mbox{\it pa}_i}$ is computed based on the 
conditional probability tables.
Let $D$ be the product of all denominators of conditional probabilities 
(given as rational numbers) in all tables
of $\cB$. Obviously, the size of the representation of $D$ is bounded polynomially 
in the size of the input $\cB$. 
We will compute the probabilities of the basic terms $\PPB{X_1 = \xi_1,\dots,X_n = \xi_n}$
in form $n_{\xi_1,\dots, \xi_n}/D$
for integers $n_{\xi_1,\dots, \xi_n}$ with 
$0\le n_{\xi_1,\dots, \xi_n}\le D$.
Consequently, the value  $\PPB{\delta}$ will be represented as $n_{\delta} /D$.

To decide if $\varphi$ is true, we estimate the numerical values of terms
in any inequality $\bt_1\le  \bt_2$ of~$\varphi$. Since subtraction (and negation) 
is allowed, we will show how to verify that $\bt> 0$, for a polynomial term $\bt$.
Moreover, w.l.o.g., we will assume the negative signs are pushed to the 
bottom of the expression, i.e., $\bt$ uses addition, multiplication, and summation 
operators applied on basic terms $ s\cdot \PPB{\delta}$, with $s\in\{+1,-1\}$.
To decide if $\bt> 0$, we use a $\ccPP$ algorithm, described as 
Algorithm~\ref{alg:NTM:counting}, which, on input  $\bt,  \cB$,  generates  
a nondeterministic computation tree with more accepting than rejecting paths 
if and only if $\bt> 0$.
Below we show that the algorithm works correctly.

Let $\bt$ be  a term of degree $\Delta=\deg(\bt)$
and let $\bt=\sum_j \prod_{p_{\delta}\in J_j} p_{\delta}$ be an expansion of the term, 
i.e., let each $p_{\delta}$ be an 
atomic term $\PPB{\delta}$ or  $-\PPB{\delta}$, for some $\delta$.
We partition the set $\cJ_{\bt}=\{J_1,J_2,\ldots \}$ into $\cJ_{\bt}^{\Plus}\cup\cJ_{\bt}^{\Minus}$ 
such that,  for any $J\in \cJ_{\bt}^{\Plus}$, the product $\prod_{p_{\delta}\in J} p_{\delta}>0$
and,  for any $J\in \cJ_{\bt}^{\Minus}$, we have $\prod_{p_{\delta}\in J} p_{\delta}<0$.
Since every $\PPB{\delta}$ is represented as $n_{\delta} /D$, we can rewrite the 
expansion as 
\begin{align*}
\bt\ &=
	\sum_{J\in \cJ_{\bt}^{\Plus}} \frac{1}{D^{|J|}}\prod_{p_{\delta}\in J} n_{\delta} +
	\sum_{J\in \cJ_{\bt}^{\Minus}} \frac{1}{D^{|J|}}\prod_{p_{\delta}\in J} n_{\delta} \\[1mm]
      &=
	 \frac{1}{D^{\Delta}} \sum_{J\in \cJ_{\bt}^{\Plus}}  D^{{\Delta}-|J|} \prod_{p_{\delta}\in J}  n_{\delta} +
	 \frac{1}{D^{\Delta}} \sum_{J\in \cJ_{\bt}^{\Minus}} D^{{\Delta}-|J|} \prod_{p_{\delta}\in J} n_{\delta} .
\end{align*}
Thus, $\bt>0$ if 
$ \sum_{J\in \cJ_{\bt}^{\Plus}}  D^{{\Delta}-|J|} \prod_{p_{\delta}\in J}  n_{\delta} >
 \sum_{J\in \cJ_{\bt}^{\Minus}} D^{{\Delta}-|J|} \prod_{p_{\delta}\in J} n_{\delta}$.
 
To see that the algorithm correctly estimates that this inequality holds,
we will only take into account the accepting/rejecting paths that are determined 
in Line~\ref{counting:line:accept} and \ref{counting:line:reject} of the algorithm.
The paths generated in Line~\ref{counting:line:acc:rej} are irrelevant because 
they have no impact on the final result.
By structural induction, we prove that \emph{Count}($\compactEquals{\bt, S, d,\sigma}$),
with empty $S$, $d\ge\Delta-1$, and $\sigma=+1$, generates
$\sum_{J\in \cJ_{\bt}^{\Plus}} D^{d+1-|J|}\prod_{p_{\delta}\in J} n_{\delta}$
accepting  and 
$\sum_{J\in \cJ_{\bt}^{\Minus}} D^{d+1-|J|}\prod_{p_{\delta}\in J} n_{\delta}$
rejecting paths. Thus, calling \emph{Count} with $\bt$ and 
$d=\Delta-1$, we get the correct result.

If $\bt=\PPB{\delta}$, then \emph{Count}($\compactEquals{\bt, S, d,\sigma}$),
generates $n_{\delta}$ accepting paths and  
no rejecting paths. If $\bt=-\PPB{\delta}$, then the resulting computation tree has  
$n_{\delta}$ rejecting paths and no accepting ones.
For $\bt=\bt_1 +\bt_2$, from inductive assumption, we have that 
the procedure generates
$\sum_{i=1}^2\sum_{J\in \cJ_{\bt_i}^{\Plus}} D^{\Delta-|J|}\prod_{p_{\delta}\in J} n_{\delta}=
\sum_{J\in \cJ_{\bt}^{\Plus}} D^{\Delta-|J|}\prod_{p_{\delta}\in J} n_{\delta}$
accepting  and 
$\sum_{i=1}^2\sum_{J\in \cJ_{\bt_i}^{\Minus}} D^{\Delta-|J|}\prod_{p_{\delta}\in J} n_{\delta}=
\sum_{J\in \cJ_{\bt}^{\Minus}} D^{\Delta-|J|}\prod_{p_{\delta}\in J} n_{\delta}$
rejecting paths. 
Case $\bt=\mbox{$\sum_{x} \bt'$}$ is analogous to common summation.

Now, let  $\bt=\bt_1 \cdot \bt_2$. Note that we can actually assume a slightly stronger 
inductive assumption, namely that after pushing $\bt_2$ to the stack $S$ and 
calling \emph{Count}($\bt_1, S, d,\sigma$) in Line~\ref{counting:line:push:product},
the algorithm, until popping~$\bt_2$ from the stack $S$ in Line~\ref{counting:line:empty:stack},  
generates 
$\sum_{J\in \cJ_{\bt_1}^{\Plus}} \prod_{p_{\delta}\in J} n_{\delta}+\sum_{J\in \cJ_{\bt_i}^{\Minus}} \prod_{p_{\delta}\in J} n_{\delta}$
paths. For every path, the algorithm stores the sign $\sigma=+1$
if the path corresponds to the product $J\in \cJ_{\bt_1}^{\Plus}$
and $\sigma=-1$ for  $J\in \cJ_{\bt_1}^{\Minus}$. Next 
it pops $\bt_2$ from the stack $S$, making it empty, and, 
with $d=\Delta-1-|J|$, it calls
\emph{Count}($\bt_2, S, d,\sigma$) in Line~\ref{counting:line:empty:stack:call}.
By the inductive assumption, we get that finally the algorithm generates 
\begin{align*}
&\mbox{$
\sum_{J\in \cJ_{\bt_1}^{\Plus}} \prod_{p_{\delta}\in J} n_{\delta}\cdot 
\left(\sum_{J'\in \cJ_{\bt_2}^{\Plus}} D^{\Delta+|J|+|J'|}\prod_{p_{\delta}\in J'} n_{\delta}\right) $}\\
&\quad \quad + 
\mbox{$
\sum_{J\in \cJ_{\bt_1}^{\Minus}} \prod_{p_{\delta}\in J} n_{\delta}\cdot 
\left(\sum_{J'\in \cJ_{\bt_2}^{\Minus}} D^{\Delta+|J|+|J'|}\prod_{p_{\delta}\in J'} n_{\delta}\right) 
$}
\end{align*}
accepting paths and 
\begin{align*}
&\mbox{$
\sum_{J\in \cJ_{\bt_1}^{\Plus}} \prod_{p_{\delta}\in J} n_{\delta}\cdot 
\left(\sum_{J'\in \cJ_{\bt_2}^{\Minus}} D^{\Delta+|J|+|J'|}\prod_{p_{\delta}\in J'} n_{\delta}\right)$}\\
&\quad \quad + 
\mbox{$
\sum_{J\in \cJ_{\bt_1}^{\Minus}} \prod_{p_{\delta}\in J} n_{\delta}\cdot 
\left(\sum_{J'\in \cJ_{\bt_2}^{\Plus}} D^{\Delta+|J|+|J'|} \prod_{p_{\delta}\in J'} n_{\delta}\right) 
$}
\end{align*}
rejecting paths. This completes the proof.

\begin{algorithm}[ht]
\SetAlgoLined
\KwIn{ 
$\Tprobpolysum$ term $\bt$ and BN $\cB=(\cG,P_{\cB})$.} 
\KwOut{Nondeterministic computation tree with more accepting than rejecting paths
iff $\bt > 0$.}

 Set stack $S$ initially empty; Let $d:=\deg(\bt)-1$, and let $\sigma:=+1$\;
 Call \emph{Count}$(\bt, S, d,\sigma)$\;
\vspace{0.2cm}
\SetKwProg{myfunc}{Procedure}{}{}
\myfunc{Count$(\bt, S, d,\sigma)$}
{
\KwIn{Current term $\bt$; stack of terms to be evaluated $S$; integer $d$; sign~$\sigma\in\{+1,-1\}$}

\Switch{according to the structure of $\bt$}
{
	\Case{$\bt=\bt_1 +\bt_2$}{ 
		Split non-deterministically into two branches:\label{counting:line:t1:plus:t2:split}\\
		\quad in the first one call \emph{Count}($\bt_1, S, d,\sigma$) and \\
		\quad in  the second one call \emph{Count}($\bt_2, S, d,\sigma$);
	}

	\Case {$\bt=\bt_1 \cdot\bt_2$}{ 
		Push $\bt_2$ to stack  $S$ and 
	 	call \emph{Count}($\bt_1, S, d,\sigma$)\label{counting:line:push:product}\;
	}
	\Case{$\bt=\mbox{$\sum_{x} \bt'$}$}{ 
		Split non-deterministically into $|\mathit{Val}|$ branches, 
		each one for a single value $\xi$ of the index variable $x$\;
		 At the end of each branch call 
		 \emph{Count}($\bt'[\xi /x], S, d,\sigma$)\;
	 }	
	\Case{$\bt=s\cdot \PPB{\delta}$ is the primitive term with sign $s\in\{+1,-1\}$}{
	Split non-deterministically $n$ times
    	into $|\mathit{Val}|$ branches and store the selected values $\xi_1, \xi_2, \ldots,\xi_n$\;
    	Compute $A_{\delta}(\xi_1,\ldots,\xi_n)$\;
	Compute (from the conditional probability tables $P_{\cB}$) integer $n_{\xi_1,\dots, \xi_n} $  such that
	$\PPB{X_1 = \xi_1,\dots,X_n = \xi_n}=n_{\xi_1,\dots, \xi_n} / D$\;
    	\eIf {$A_{\delta}(\xi_1,\ldots,\xi_n)=0$ or $n_{\xi_1,\dots, \xi_n} =0$}{ 
	Split into one rejecting and one accepting branch and terminate\label{counting:line:acc:rej}\;
	}	
	{	\eIf{$S$ empty}{ 
    		Generate, in a nondeterministic way, 
		$n_{\xi_1,\dots, \xi_n} \cdot D^{d}$ computation paths;\\
		If $\sigma\cdot s =+1$, then set all of them accepting,\label{counting:line:accept}\\
	 	\quad otherwise, set all of them rejecting\label{counting:line:reject}\;}
		{Generate, in a nondeterministic way, $n_{\xi_1,\dots, \xi_n} $ computation paths\;		
		At the end of each path {\bf do}\\
		\quad Set $d:=d-1$ and $\sigma:=\sigma\cdot s$\;
		\quad Pop $\bt$ from $S$\label{counting:line:empty:stack}\;
		\quad Call \emph{Count}($\bt, S, d,\sigma$)\label{counting:line:empty:stack:call}\;
	}
	}
	}
}
}
    \caption{PP Algorithm to decide if $\bt > 0$.}\label{alg:NTM:counting}
\end{algorithm}

\end{proof}

\newpage

\begin{proof}[Proof of Lemma~\ref{prop:prob:comp:sum:graph:er:hard}]
\newcommand{\EQSEP}{;\ \ }
To show the hardness, we will reduce the $\existsR$-complete problem ETR-INV to $\SATprobcompsumgraph$ with a binary domain. An ETR-INV instance consists of three types of equations~\citep{abrahamsen2018art}:
$$x_i = 1\EQSEP{}x_i + x_j = x_k\EQSEP{}x_i \cdot x_j = 1$$
where $x_i,x_j,x_k\in \{x_1,\ldots,x_n\}$ and asks if there exist variables $x_1,\ldots,x_n\in [1/2, 2]$ that satisfy all equations.
We double each variable to reduce the range to $[1/4,1]$, which fits in a probability distribution, and obtain equivalent equations:
$$2x_i = 1\EQSEP{}2x_i +2 x_j = 2x_k\EQSEP{}2x_i \cdot 2x_j = 1$$
which are again equivalent to
$$x_i = 1/2\EQSEP{}x_i + x_j = x_k\EQSEP{}x_i \cdot x_j = 1/4.$$

We introduce random variables $X_1,\ldots,X_n$, and replace each value $x_i$ with $\PP{X_i = 0}$, obtaining equations
$$\PP{X_i = 0} = 1/2\EQSEP{}\PP{X_i = 0} + \PP{X_j = 0} = \PP{X_k = 0}\EQSEP{}\PP{X_i = 0} \cdot \PP{X_j = 0} = 1/4.$$
Additionally, we need the constraint $$\PP{X_i = 0} \ge 1/4$$ to encode $x_i\in[1/4,1]$.

This already gives us---trivially---that $\SATprobpoly$ is $\existsR$-hard.

Now we remove the multiplication, by cloning each random variable $X_i$ to a variable $Y_i$ with the constraint $$\PP{X_i=0}=\PP{Y_i=0},$$ and replacing them on the right side of the multiplication:
$$\PP{X_i = 0} = 1/2\EQSEP{}\PP{X_i = 0} + \PP{X_j = 0} = \PP{X_k = 0}\EQSEP{}\PP{X_i = 0} \cdot \PP{Y_j = 0} = 1/4.$$
We can force all variables to be independent by using a graph $\cG$ without edges, which allows us to write the equations as:
$$\PP{X_i = 0} = 1/2\EQSEP{}\PP{X_i = 0} + \PP{X_j = 0} = \PP{X_k = 0}\EQSEP{}\PP{X_i = 0, Y_j = 0} = 1/4.$$

This already gives us that $\SATproblingraph$ is $\existsR$-hard.

Now we remove the addition by introducing two new variables $S_{ij}$ and $S'_{ij}$ for each addition, where $S'_{ij}$ is a child of $S_{ij}$ in the graph. We use constraints $$\PP{S_{ij} = 0, S'_{ij} = 0} = \PP{X_i = 0}\EQSEP{}\PP{S_{ij} = 1, S'_{ij} = 0} = \PP{X_j = 0}.$$ 
Because $\sum_{t} \PP{S_{ij} = t, S'_{ij} = 0}=\PP{S'_{ij} = 0}$ and the value should not be larger than $\PP{X_k = 0}$, this gives us 
$$\PP{X_i = 0} = 1/2\EQSEP{}\PP{S'_{ij} = 0}  = \PP{X_k = 0}\EQSEP{}\PP{X_i = 0, Y_j = 0} = 1/4.$$

This gives us that $\SATprobcompgraph$ is $\existsR$-hard\footnote{The two remaining constants can be obtained as usual through temporary variables from $\PP{\top} = 1$ and the addition reduction.}.
\end{proof}

\begin{proof}[Proof of Theorem~\ref{thm:SATinterventcompsumgraph:ccNEXP-complete}]

    We follow along with the proof of \cite[Thm. 9]{doerflerICLR2025} showing that $\SATcausalcompsum$ is $\NEXP$ hard.
    We reduce from the satisfiability of Schönfinkel-Bernay sentences.
    Boolean values are represented as the value of the random variables, with $0$ meaning \false{} and $1$ meaning \true{}.
    We will assume that $c=2$, so that all random variables are binary, i.e. $\mathit{Val} = \{0,1\}$.
    
    We follow the convention of \cite{doerflerICLR2025} and write (in)equalities between random variables as $=$ and $\neq$. In the binary setting, $X=Y$ is an abbreviation for $(X = 0 \land Y = 0) \vee (X = 1 \land Y=1)$, and $X\neq Y$ an abbreviation for $\neg (X=Y)$.
    To abbreviate interventions, we write $[w]$ for $[W=w]$, $[\bw]$ for interventions on multiple variables $[\bW=\bw]$, and $[\bv\setminus\bw]$ for interventions on all endogenous variables except $\bW$.
    
    Random variables  
    $\bX=\{X_1,\ldots X_n\}$ and $\bY= \{Y_1,\ldots Y_n\}$ are used for the quantified Boolean variables $\bx,\by$
    in the Schönfinkel-Bernay sentence $\exists \bx  \forall \by \psi$.
    For each distinct $k$-ary relation $R_i(z_1,\ldots,z_k)$ in the formula, we define a random variable $R_i$ and variables $Z^1_i,\ldots,Z_i^k$ for the arguments. For the $j$-th occurrence of that relation $R_i(t^1_{ij},\ldots,t^k_{ij})$ with $t^l_{ij} \in \{x_1,\ldots,x_n,y_1,\ldots,y_n\}$, we define another random variable $R^j_i$. 
    
    So far the setup is identical to \cite{doerflerICLR2025}, however for $\SATinterventcompsumgraph$, we are not able to use counterfactual reasoning.
    Instead, we have to use the graph structure to encode most of our constraints.

    
    We use the graph $\cG$ to ensure that $R_i$ only depends on its arguments.
    For this, the incoming edges to $R_i$ are exactly the edges from $Z_i^1, \ldots, Z_i^k$.
    In the same way we ensure that $R^j_i$ only depends on its arguments by making the only incoming edges to $R^j_i$ exactly the edges from $T_1, \ldots, T_k$.
    
    We add the following constraint to ensure that $R^j_i$ and $R_i$ have an equal value for equal arguments:
    \begin{equation}\label{eqn:proof:lin:intervent:nexptime:consistent:relations}
    \sum_{t_1,\ldots,t_k} \PP{[T_1=t_1,\ldots,T_k=t_k, Z^1_i = t_1,\ldots,Z^k_i=t_k] (R^j_i \neqinPP R_i)} = 0 .
    \end{equation}
    
    
    We continue by ensuring that the values of $\bX$ are not affected by the values of $\bY$ by making the $\bY$ the only predecessors of the $\bX$ in $\cG$.
     
    Let $\psi'$ be obtained from $\psi$ by replacing equality and relations on the Boolean values with the corresponding definitions for the random variables:
    \begin{equation} \label{eqn:proof:lin:intervent:nexptime:main:equation}
    \sum_{\by} \PP{[\by] \psi' } = 2^n
    \end{equation}
    
    Suppose the $\SATinterventcompsumgraph$ instance is satisfied by some model $\fM$.
    We need to show that $\exists \bx  \forall \by \psi$ is satisfiable.
    First note that the values $\bx$ of the variables $\bX$ deterministically depend only on the values $\bu$ of the exogenous variables.
    Choose any value $\bu$ (and thus corresponding $\bx$) that contributes a positive probability to any of the $\PP{[\by] \psi' }$ in equation \eqref{eqn:proof:lin:intervent:nexptime:main:equation}.
    If there was any $\bx$ chosen this way that would not satisfy $\psi$ for all values of $\by$, $\PP{[\by] \psi' }$ would be less than $ 1$  for these values of $\bx$ (determined by $\bu$) and $\by$, and equation \eqref{eqn:proof:lin:intervent:nexptime:main:equation} would not be satisfied.
    The relations $R_i$ are chosen as the values of the random variables $R_i$.
    Equation \eqref{eqn:proof:lin:intervent:nexptime:consistent:relations} now enforces two things, together with the graph structure.
    First of all, the $R^j_i$ and $R_i$ depend deterministically only on their parameters and possibly the $\bu$ values.
    Since we have fixed the values $\bu$, they only depend on their parameters for the rest of this proof.
    Secondly, all occurrences of $R_i$ are given the same value.
    The satisfiability of $\PP{[\by] \psi' }$ thus implies the satisfiability of $\psi$ and ultimately of $\exists \bx  \forall \by \psi$.
    
    Suppose $\exists \bx  \forall \by \psi$ is satisfiable.
    Create a deterministic model $\fM$ by choosing the values of the random variables $\bX$ to the values chosen by $\exists \bx$.
    The random variables $R_i$ (and $R^j_i$) are chosen as functions depending on their parameters, behaving exactly like the value of the relation $R_i$.
    These are allowed dependencies according to our constructed graph structure.
    Clearly equation \eqref{eqn:proof:lin:intervent:nexptime:consistent:relations} is satisfied by this.
    The remaining random variables (the $\by$) can be chosen arbitrarily as constants.
    Their values do not matter since they are overridden by interventions whenever their value would become relevant.
    It remains to consider Equation~\eqref{eqn:proof:lin:intervent:nexptime:main:equation}, but this one is satisfied because $\psi$ is satisfied for all values of $\by$.
    
    Additionally, $\SATinterventcompsumgraph$ is a special case of $\SATcausalcompsumgraph$ and thus can be solved in $\NEXP$ by Theorem~\ref{thm:causal_with_graph:reduction} and \cite[Thm. 9]{doerflerICLR2025}. 
\end{proof}

\subsection{Proofs of Section \ref{sec:etr:sum}}
 
\def\setofinterventions{{\boldsymbol{\alpha}}}
\begin{algorithm}[ht]
    \SetAlgoLined
    \KwIn{$\Lcausalpolysum$ formula $\varphi$ and a unary number $p \in \IN$}
    \KwOut{Is the instance satisfiable with a model of size at most $p$?}

    Guess a causal order $X_1,\ldots,X_n$\;
    Explicitly expand all exponential sums in $\varphi$\;

    For each probability $\PP{\delta_i}$ occurring in $\varphi$ introduce a variable $P_i$, initialized to $0$\;
    Introduce symbolic variables $z_1, \ldots, z_p$\;
    \For{$i \in \{1, \ldots, p\}$}{
        Guess a deterministic SCM $\fM$ using the causal order $X_1, \ldots, X_n$.
        \For{each $\PP{\delta_i}$ that is satisfied by $\fM)$}{
            Symbolically increment $P_j$ by $z_i$.
        } 
    }
    Replace all probabilities $\PP{\delta_i}$ by the values of $P_i$ and check whether the resulting $\ETR$-formula (in the variables $z_1, \ldots, z_p$) is satisfiable using Renegar's algorithm\;
    \caption{Solving $\SATcausalpolysumsm$ in $\NEXP$}\label{alg:causal:polysm:sum:in:NEXP}
\end{algorithm}

\begin{proof}[Proof of Fact~\ref{fact:small:model:sat:not:Markovian}]
Let $p = 3$, $n = 2$, and the domain of $\bX$ be binary.

We give a counter example of a model $\fM$ where $\countPosX{\fM}\le p$, but no equivalent Markovian model $\fM'$ with $\countPosU{\fM'} \le p$ exists.   


Let $P(u_1, u_2) = 1/4$ for all $u_1,u_2$. Let $F_1(u_1) = u_1$ and $F_2(x_1, u_2) = u_2 (1 - x_1)$.

For easier readability, this model can also be shown as a table:
$\begin{array}{cc|c|cc}
U_1 & U_2 & P(\bu) & X_1 & X_2 \\
\hline\hline
0&0&1/4&0 & 0\\
0&1&1/4&0 & 1\\
1&0&1/4&1 & 0 \\
1&1&1/4&1 & 0\\
\end{array}$.


Then $\mathbb{P}(x_1,x_2) = \left(\begin{array}{cc}
  1/4   & 1/2 \\
  1/4   &  0
\end{array}\right)$, where the column depends on the value for $X_1$ and the row on the value for $X_2$.

Then $\fM$ is a small model for the observed variables with $\#_\bX^+ = 3 \leq p$.

The models $\fM$ and $\fM'$ are only equivalent if $3 = \countPosX{\fM} = \countPosX{\fM'} $.
Due to Fact~\ref{fact:small:model:observed:below:unobserved}, $\countPosX{\fM'} \leq \countPosU{\fM'}$.
If $\countPosU{\fM'} > p = 3$, it is not a small model according to Definition~\ref{lab:def:small:model:new:U}. Thus $\countPosU{\fM'} = 3$.

By definition, a Markovian model $\fM'$ for two observed variables must have two unobserved variables $U_1, U_2$ that are independent of each other. Then $\#_\bU^+$ is the product of the ranges of the unobserved variables, i.e., $\#_\bU^+(\fM') = \#\{u_1: P(U_1=u_1) > 0\} \cdot \#\{u_2: P(U_2=u_2) > 0\}$.
But $3$ is a prime number, which means there is one unobserved variable taking three values and another unobserved variable that is constant. Since each $X_i$ can only depend on its own $U_i$ and possibly the other $ \bX$, one of the $X_i$ is constant given the other one. Since the $\bX$ are binary, we have 
$\countPosX{\fM'} \le 2$ and any small Markovian model cannot be equivalent to $\fM$. 
\end{proof}

\begin{proof}[Proof of Lemma~\ref{lemma:small:model:in:linear}]
    If one examines the proofs in \citep{doerflerICLR2025}, all models have the small-model property.
    For the proofs of $\SATprobcompsumsm$,  $\SATproblinsumsm$,  $\SATinterventcompsumsm$, and $\SATinterventlinsumsm$, a small-model is explicitly mentioned as the proof apply a lemma of \citep{fagin1990logic} to find a solution in the rational numbers. 
    The proofs for $\SATcausalcompsumsm$ and $\SATcausallinsumsm$ construct a purely deterministic model using the functions $\cF$, which works with any distribution $P$ of the unobserved variables $\bU$ and thus also with a small one. 
\end{proof}

We prove the $\NEXP$-completeness of the interventional
and counterfactual satisfiability problems with small model property as follows.

\begin{proof}[Proof of Theorem~\ref{lab:lemma:interv:counter:smconj}]
    We show that $\SATinterventpolysumsm$ is $\NEXP$ hard.
    This is done similarly as in the proof of Theorem~\ref{thm:SATinterventcompsumgraph:ccNEXP-complete}, again reducing from the satisfiability of Schönfinkel-Bernay sentences.
    The setup of all random variables is identical, however, this time we have to use the polynomial arithmetic to encode our constraints.
    
    Using Lemma~\ref{lemm:causal:ordering}, we require that each model has the variable ordering $X_1 \prec \ldots \prec X_n \prec Y_1 \prec \ldots \prec Y_n$ and then the variables $R_i$ and $R_i^j$ in some arbitrary, but fixed order.
    This allows to enforce a deterministic model by adding the constraint $\sum_{\bt} \sum_{\bv} (\PP{[\bt]\bv}^2 - \PP{[\bt]\bv})^2 = 0$ for each subset $T$ of variables containing a (possibly empty) consecutive subset of variables starting at $X_1$ in our variable ordering.
    This enforces every probability in this model to be either $0$ or $1$, and it further enforces that no variable depends on any exogenous variables but only on the endogenous variables.
    This determinism allows us to rephrase the additional equations from \cite{doerflerICLR2025} using purely interventional terms:
    
    We use the following constraint to ensure that $R_i$ only depends on its arguments: 
    \begin{equation}
        \sum_{\bv}(\PP{[z^1_i,\ldots,z^k_i] R_i=r_i} - \PP{[\bv\setminus r_i] R_i=r_i})^2 = 0
    \end{equation}
    
    Thereby $\sum_{\bv}$ refers to summing over all values of all endogenous variables in the model and the constraint says that an  intervention on $Z^1_i,\ldots,Z^k_i$ gives the same result for $R_i$ as an intervention on $Z^1_i,\ldots,Z^k_i$ and the remaining variables, excluding $R_i$.
    
    We use the following constraint to ensure that $R^j_i$ only depends on its arguments: 
    \begin{equation}\label{eqn:proof:interv:sm:nexptime:relation:occurrence}
        \sum_{\bv} (\PP{[t_1,\ldots,t_k] R^j_i=r^j_i} - \PP{[\bv\setminus r^j_i] R^j_i=r^j_i})^2 = 0
    \end{equation}
    
    and that $R^j_i$ and $R_i$ have an equal value for equal arguments:
    \begin{equation}\label{eqn:proof:interv:sm:nexptime:consistent:relations}
        \sum_{t_1,\ldots,t_k} (\PP{[T_1=t_1,\ldots,T_k=t_k] R^j_i=r_i} - \PP{[Z^1_i = t_1,\ldots,Z^k_i=t_k] R_i=r_i})^2 = 0\,.
    \end{equation}
    
    We add the following constraint for each $X_i$  to ensure that the values of $\bX$ are not affected by  the values of $\bY$:
    \begin{equation}\label{eqn:proof:interv:sm:nexptime:x:y:order}
        \sum_{\bv}\sum_{\by'} (\PP{ [\bv \setminus x_i ] X_i=x_i} - \PP{[\bv \setminus (x_i, \by),\bY=\by'] X_i=x_i})^2 = 0 
    \end{equation}
     
    Let $\psi'$ be obtained from $\psi$ by replacing equality and relations on the Boolean values with the corresponding definitions for the random variables:
    \begin{equation}\label{eqn:proof:interv:sm:nexptime:main:equation}
        \sum_{\by} \PP{[\by] \psi' } = 2^n
    \end{equation}

    The correctness now follows by the same argument as Theorem~\ref{thm:SATinterventcompsumgraph:ccNEXP-complete}.
    On one hand, the model $\fM$ constructed there if $\exists \bx \forall \by \psi$ is satisfiable is already deterministic and thus fulfills all the above constraints.
    On the other hand, if there is a model $\fM$ satisfying the above constraints, all possible values $\bu$ of the exogenous variables result in the exact same probabilities which, following the original proof, lead to $\exists \bx \forall \by \psi$ to be satisfiable.

    Furthermore $\SATcausalpolysumsm \in \NEXP$.
    We adapt a trick from \cite[Thm. 14]{doerflerICLR2025}.
    Instead of viewing the model as changing with $\bu$, we instead consider for each value $\bu$ a separate (deterministic) model.
    Due to the small-model property, we then only have to consider polynomially many models and each model has a straight-forward representation using exponentially many bits, directly encoding all the functions in $\cF$ as explicit tables.
    Algorithm~\ref{alg:causal:polysm:sum:in:NEXP} uses this observation to solve $\SATcausalpolysumsm$ non-deterministically in exponential time.
    It constructs an $\ETR$-formula with polynomially many variables, polynomially many polynomials of polynomial degree, but potentially exponentially many monomials.
    Each coefficient of the formula can be represented using a polynomial bit length.
    Renegar's algorithm \cite{renegar1992computational} can thus solve the $\ETR$-formula deterministically in time $O(\poly(|\varphi|)^{O(p)})$, which is exponential in the input length.
\end{proof}

\end{document}